\newtheoremstyle{bfnote}%
  {}{}
  {\itshape}{}
  {\bfseries}{.}
  { }{\thmname{#1}\thmnumber{ #2}\thmnote{ (#3)}}
\theoremstyle{bfnote}
\newtheorem{theorem}{Theorem}
\newtheorem{proposition}[theorem]{Proposition}
\newtheorem{definition}[theorem]{Definition}
\newtheorem*{remark}{Remark}
\newcommand{\lscr}{\underline{\mathcal{R}_{\mathcal{F}}}(\rho)}
\newcommand{\rob}{\mathcal{R}_{\mathcal{F}}(\rho)}
\newcommand{\lscrm}{\underline{\mathcal{R}^{(m)}_{\mathcal{F}}}(\rho)}
\newcommand{\robm}{\mathcal{R}^{(m)}_{\mathcal{F}}(\rho)}
\newcommand{\ff}{\mathcal{F}}
\newcommand{\xiseq}{\{\xi_n\}_n}
\newcommand{\ddh}{\mathcal{D(H)}}
\newcommand{\tth}{\mathcal{T(H)}}
\newcommand{\bbh}{\mathcal{B(H)}}
\DeclareMathOperator{\tr}{Tr}
\DeclareMathOperator{\cone}{cone}
\newcommand{\kkh}{\mathcal{K(H)}}
\DeclareMathOperator{\povm}{POVM}
\newcommand{\swap}{V_{\textup{SWAP}}}
\begin{document}

\title{All non-Gaussian states are advantageous for channel discrimination: \\ Robustness of non-convex continuous variable quantum resources}

\author{Leah Turner}
\email{pmylt1@nottingham.ac.uk}
\affiliation{School of Mathematical Sciences and Centre for the Mathematical and Theoretical Physics of Quantum Non-Equilibrium Systems, University of Nottingham, University Park, Nottingham, NG7 2RD, United Kingdom}
 
\author{M\u{a}d\u{a}lin Gu\c{t}\u{a}}
\email{madalin.guta@nottingham.ac.uk}
\affiliation{School of Mathematical Sciences and Centre for the Mathematical and Theoretical Physics of Quantum Non-Equilibrium Systems, University of Nottingham, University Park, Nottingham, NG7 2RD, United Kingdom}

\author{Gerardo Adesso\textsuperscript{\ \!\Letter\ }}
\email{gerardo.adesso@nottingham.ac.uk}
\affiliation{School of Mathematical Sciences and Centre for the Mathematical and Theoretical Physics of Quantum Non-Equilibrium Systems, University of Nottingham, University Park, Nottingham, NG7 2RD, United Kingdom}

\author{\vspace*{0.1cm}}
\affiliation{\textrm{\em{\normalsize{\textsuperscript{\Letter}}}Author to whom any correspondence should be addressed}}

\begin{abstract}
Which quantum phenomena are advantageous for information processing tasks? By classifying quantum states and operations as resourceful versus non-resourceful, or free, the mathematical formalism of  quantum resource theories helps to address such questions. 
For the task of discriminating channels applied to a probe state, it has been shown that under certain conditions -- namely, the set of free states being convex in any dimension or possibly non-convex but finite dimensional -- every resourceful probe state can provide an advantage,  quantified by a resource monotone known as generalized robustness. In this work, bypassing the limitations of previous studies, we define the generalized robustness for an arbitrary resource theory and show that it admits two operational interpretations. Firstly, it provides an upper bound on the maximal advantage in channel discrimination tasks implemented on multiple copies of the probe states. Secondly, in many physically relevant theories, it quantifies the advantage in single-copy channel discrimination tasks in a worst case scenario. 
We further present a general construction of multi-copy resource witnesses and provide practical methods to bound the generalized robustness in experiments. Finally, we apply our results to a key case study in continuous variable quantum information: the resource theory of non-Gaussianity. This theory is naturally defined by a set of free states (Gaussian states) that is non-convex and infinite-dimensional. Our work then shows conclusively that all non-Gaussian states can provide an advantage in some channel discrimination task, even those that are simply mixtures of Gaussian states and are typically disregarded for other quantum information tasks. To illustrate our findings, we provide exact formulas for the robustness of non-Gaussianity of Fock states, along with an analysis of the robustness for a family of non-Gaussian states within the convex hull of Gaussian states.

\end{abstract}

\maketitle

\footnotetext[1]{Corresponding author.}

\section{Introduction}\label{sec:Intro} 

The integration of quantum mechanics into modern science and technology has led to the development of quantum technologies, which hold the potential to outperform their classical counterparts in numerous applications. Such technologies utilize uniquely quantum phenomena, such as entanglement \cite{entanglement}, coherence \cite{coherence}, and non-Gaussianity \cite{nonGstates}, to achieve capabilities unattainable by conventional means. The systematic characterization of these and other quantum features has given rise to the field of {\it quantum resource theories} \cite{QRTs}, in which quantum phenomena are rigorously analyzed to explore what advantages they may provide. 


 Quantum resource theories categorize states as either free or resourceful, with the expectation that resource states are advantageous over those that are free. The study then aims to quantify how useful a particular state would be in some relevant task. Beyond specific examples, resource theories can also be studied in a general sense, allowing for the understanding of properties common in a wide range of resource scenarios and unveiling links between different quantum properties.  One such link has been successfully established by focusing on a resource measure known as {\it generalized robustness} \cite{original_robustness}, which turns out to quantify how useful a resource state is in a class of hypothesis testing tasks commonly referred to as {\it channel discrimination}. This result was first shown in finite dimensions for general resource theories with convex sets of free states \cite{robustnessofcoherence,finiteconvexrobustness}, and later extended to generalized probabilistic theories \cite{GPTRobustness}, infinite-dimensional convex theories \cite{infiniteconvex,longinfiniteconvex}, and resource theories with non-convex free sets in finite dimensions \cite{finitenonconvex,longfinitenonconvex}. These results cover several cases of physical relevance, including the convex resource theories of quantum coherence and entanglement \cite{coherence,entanglement} (whose resource content cannot be increased by classical mixing) and the non-convex resource theories of total correlations and quantum correlations beyond entanglement \cite{starresourcetheories,ABC2016} (where shared randomness cannot be regarded as free).  Similar general results have been obtained for channel exclusion tasks by quantifying advantages in terms of resource weight measures \cite{WoR} and also for a family of quantum betting tasks which interpolates between channel discrimination and exclusion \cite{Betting}. 
 All these results are nevertheless not universal, as they leave out the nontrivial case of quantum resources defined by a non-convex set of free states in infinite dimensions, which play a prominent role in continuous variable quantum information processing. 

 As an alternative to the discrete variable approach, {\it continuous variable quantum information} \cite{QI_with_CV,alessiobook} is growing as a promising area, largely due to its ease of implementation with current quantum optical setups. Although quantum technologies in this regime are rapidly advancing, less is understood about the specific resources needed to underpin the quantum advantages such technologies aim to achieve. This is largely due to the fact that mathematical techniques developed for the discrete variable setting do not necessarily apply to continuous variables, as the infinite-dimensional Hilbert spaces involved are substantially more complex than their finite-dimensional counterparts. Several results in this area make restricting assumptions on the spaces involved, such as a bounded mean energy which effectively makes the Hilbert space finite \cite{continuity_too_strong}, or work solely within the simplified Gaussian regime. 
 
 Gaussian states admit an elegant mathematical formalism \cite{CVQI_and_beyond} and arise naturally in physical systems as the ground and thermal equilibrium states of harmonic Hamiltonians \cite{alessiobook}. They can be created and manipulated with relative ease using linear optics, which makes Gaussian quantum information a popular area to explore theoretically and experimentally \cite{GaussianQI}. However, simplicity comes at the price of significant limitations. While basic protocols in domains such as quantum communication and metrology can be implemented using Gaussian states and Gaussian operations alone, their performance may be improved and optimized only by resorting to non-Gaussian elements \cite{estimation,estimationng,photonsubtraction,teleportationng,gaussianqkd,gaussianqkdlimited,variationalcentrone}. Additionally, a series of no-go results involving e.g.~resource distillation, universal quantum computation and error correction, mean that leaving the Gaussian world is necessary in order to even achieve these and other quantum information tasks \cite{noGdistillation,noGdistillation2,noGdistillation3,gaussianQRTs,noGEC,wignernegativity,bertadista}.

 We see therefore that {\it non-Gaussianity} emerges quite naturally as an essential resource for continuous variable quantum information processing \cite{nonGstates,quantifyingnG,nonGfilip}. This raises the central question: Is  any non-Gaussian state useful for something? Does any form and amount of non-Gaussianity translate into a concrete advantage in some quantum task? In previous literature, a distinction has often been made between non-Gaussian states which cannot be obtained as convex mixtures of Gaussian states --- referred to as possessing `quantum' or `genuine' non-Gaussianity --- whose usefulness has been more thoroughly investigated \cite{convexG1,convexG2,infiniteconvex,longinfiniteconvex,FilipMista,Tommoguitar,stellartois,stellaresource}, and states within the convex hull of the (non-convex) set of Gaussian states, whose resourcefulness has not been generally appreciated, as they admit an efficient classical simulation \cite{wignernegativity} while still being non-Gaussian. Quantitative approaches to witness or measure non-Gaussianity have been proposed e.g.~in terms of how distinguishable a given non-Gaussian state is from a reference Gaussian state (for general non-Gaussianity) \cite{quantifyingnG,gaussianHS,gaussianRE,GREmimimized,husimi,negentropy} or from the convex hull of Gaussian states (for the `genuine' variation) \cite{infiniteconvex,longinfiniteconvex,FilipMista,Tommoguitar,hahn24}.   However, as anticipated, the operational framework for the generalized robustness so far does not include the case of infinite-dimensional, non-convex resource theories, and hence as of yet cannot be used to study non-Gaussianity and the advantages it may give in its most general incarnation. 

In this paper, we show that every resource state in any resource theory, without assuming convexity or finite dimensions, can provide an advantage in some channel discrimination task, even if it lies within the convex hull of free states [Section~\ref{sec:TwoCopies}]. We extend the generalized robustness to the case of infinite-dimensional, non-convex resource theories, prove its faithfulness and monotonicity under general free operations, and give it two operational interpretations: the worst case advantage with respect to a convex set decomposition of the free states in single-copy channel discrimination, and an upper bound on the advantage given by a resource state in multi-copy channel discrimination [Section~\ref{sec:generalroberto}]. This approach also provides observable bounds to estimate the robustness of any resource in experiments by measuring suitable witnesses on one or two copies of the system of interest. 

The framework we establish in this paper can finally be applied to study non-Gaussianity as a resource in full generality [Section~\ref{sec:NG}].  Specializing to this case, we show that the generalized robustness of non-Gaussianity is lower semi-continuous and can be calculated exactly in some relevant cases, including Fock states for any number of photons. We further show that simple mixtures of Gaussian states, which have neither `genuine' non-Gaussianity nor optical nonclassicality, can nonetheless outperform all Gaussian states at some channel discrimination task, and we provide lower bounds on the advantage characterized by the robustness of non-Gaussianity. Our results prove in particular that {\it all} non-Gaussian states, within and outside of the convex hull of Gaussian states, yield provably useful resources for quantum communication and discrimination technologies, thus providing an affirmative answer to the main question raised in this work.

\section{Preliminaries}\label{sec:backup}
Here we begin by introducing the notation and background material necessary for this paper.

Throughout this paper, $\mathcal{H}$ refers to a separable, not necessarily finite-dimensional, Hilbert space; $\bbh$ is the space of bounded operators on $\mathcal{H}$, $\tth\subseteq\bbh$ is the space of trace class operators, $\kkh \subseteq \bbh$ is the space of compact operators, and $\ddh\subseteq\tth$ is the space of density operators, or states, on $\mathcal{H}$. The trace norm and the operator norm are denoted by $\|\cdot\|_1$ and  $\|\cdot\|_{\infty}$, respectively. We further use $\cone(X)$ to denote the cone generated by the set $X\subseteq \tth$, that is $\cone(X):=\{\mu x \colon x\in X, \mu\in\mathbb{R}_{\geq 0}\}$.

\subsection{Topologies in infinite dimensions}
When working with an infinite-dimensional state space, we encounter a rich topological structure, including different topologies that no longer coincide (see e.g. \cite{Megginson}).
We provide here a brief overview of the topologies we use in this paper.

The trace norm $\|\cdot\|_1$ induces the {\em trace norm topology} on $\tth$, in which a sequence of trace class operators $\{T_n\}_n$ is said to converge to $T$ if $\lim_{n\to\infty}\|T-T_n\|_1=0$, and we write this as $T_n\xrightarrow{\text{tn}} T$. Due to the operational significance of the trace norm in quantifying state distinguishability \cite{Holevo1973,Helstrom1976}, the trace norm topology is an important topology when working with quantum states, and we will use this topology by default unless otherwise specified.

A different norm topology that we will also use in this paper is the {\em Hilbert-Schmidt topology}, induced by the Hilbert-Schmidt norm $\|X\|_2:=\tr[XX^{\dagger}]^{\frac{1}{2}}$. A sequence $\{T_n\}_n$ converges in the Hilbert-Schmidt topology to $T$ if $\lim_{n\to\infty}\|T-T_n\|_2=0$, which we write as $T_n\xrightarrow{\text{H-S}} T$.

Another topology that will prove useful is the {\em weak* topology} on $\tth$. This is the topology induced on $\tth$ by its predual, the space of compact operators $\kkh$. A sequence of trace class operators $\{T_n\}_n$ is said to converge in the weak* topology to $T\in\tth$ if $\lim_{n\to\infty}\tr[T_nK]=\tr[TK]$ for all $K\in \kkh$, and we write this as $T_n\xrightarrow{\text{w*}} T$. The usefulness of this topology in this paper is due to the Banach-Alaoglu theorem \cite[Theorem~2.6.18]{Megginson}, which says the trace norm unit ball in $\tth$ is compact in the weak* topology, and hence this topology lends itself well to arguments involving sequences.

Since we are working in infinite dimensions, all these topologies are different. It is, however, worth noting that when working specifically with states, convergence of a sequence of states to another state is equivalent in all of these topologies, i.e. for a sequence of states $T_n\in\ddh$ and another state $T\in\ddh$, we have (see e.g.~\cite{SWOT} and references therein) 
\begin{equation}
    T_n\xrightarrow{\text{tn}} T \iff T_n\xrightarrow{\text{H-S}} T \iff T_n\xrightarrow{\text{w*}} T.
\end{equation}

\subsection{Quantum resource theories}

\begin{figure}[t]
    \centering
    \subfloat[]{\includegraphics[height=2.5cm]{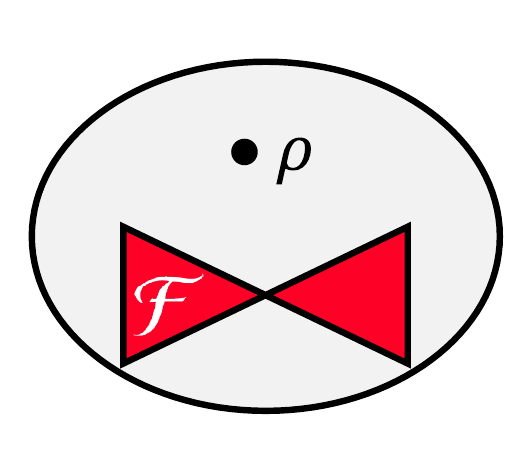}\label{fig:pringlesman}}\hfill \subfloat[]{\includegraphics[height=2.5cm]{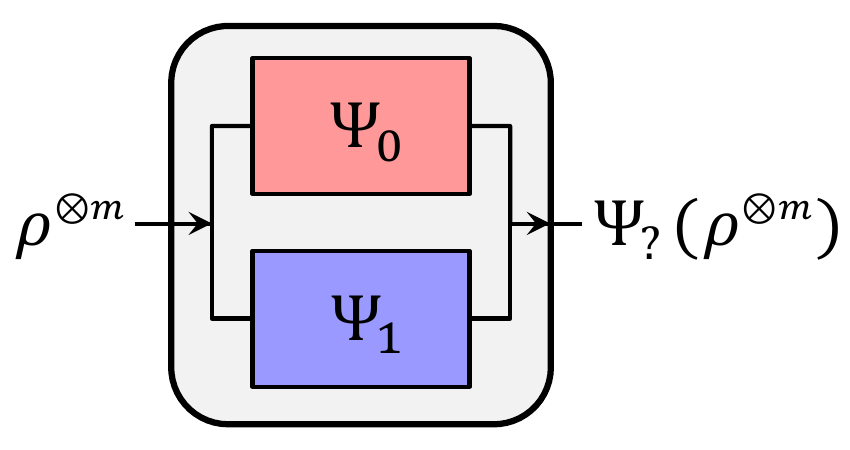}\label{fig:pringleschannel}}
\caption{\label{fig:pringles} (a) Illustration of a general resource theory. The oval represents the space $\ddh$ of states on a finite- or infinite-dimensional Hilbert space $\mathcal{H}$. The bow tie represents the closed, but possibly non-convex set $\ff\subseteq\ddh$ of free states. (b) Schematic of a $m$-copy binary channel discrimination task. The goal is to maximize the probability $p_{succ}$ of correctly guessing which channel was applied to the factorized input state $\rho^{\otimes m}$, out of an ensemble of possible channels $\{\Psi_i\}_i$ with prior probabilities $\{p_i\}_i$, by implementing an optimal measurement $\{M_i\}_i$ at the output. In this paper we show that, in any resource theory, every $\rho \notin \ff$ can strictly outperform all free states $\sigma \in \ff$ at some $m$-copy channel discrimination task (with $m=1$ if $\ff$ is convex and $m \geq 2$ otherwise).}
\end{figure}

Quantum resource theories are motivated by the fact that, in a realistic setting, the accessible states and operations will be restricted \cite{QRTs}. In this framework, the set of all possible quantum states is partitioned into the set of free states $\ff\subseteq\ddh$ (those that are readily available in the considered context) and the set of resource states $\ddh \backslash \ff$. The idea is that a non-free state $\rho\notin\ff$ can then potentially be used as a \emph{resource} to enable or improve the performance of useful tasks. The set of operations that can be implemented under the given restrictions are known as free operations, and have the requirement that $\ff$ is closed under their action; in other words, free operations cannot create resource states out of free states, and do not require any resource expenditure to be implemented. 

Resource theories can be considered in specific cases (entanglement \cite{entanglement} and coherence \cite{coherence} being  prime examples), or in a general setting, allowing for the study of results that apply to broad classes of resources beyond their physical specifications. Here, we work mostly in this general case. We make the standard assumption that $\ff$ is closed (in the trace norm topology) -- if a state can be approximated by a sequence of free states then it must also be free. To allow for full generality in the applicability of our results, this is the only assumption we make [Fig.~\ref{fig:pringlesman}].

Note that although convexity of $\ff$ is often also assumed in a general resource theory \cite{BrandaoGour}, we do not make this assumption here. This is because there exist several interesting and well-motivated sets of free states which are naturally non-convex, a notable example here being the set of Gaussian states of continuous variable systems \cite{CVQI_and_beyond}. Additionally, there are situations in which randomness can be costly and may be accounted for as a resource \cite{randomnessresource}, and therefore it is not necessarily always productive to consider mixtures of free states as free \cite{finitenonconvex,longfinitenonconvex,robustnesscontinuity,starresourcetheories}.

It is a natural assumption that states with a higher resource content should be more useful. Given $\rho\in\ddh\backslash\ff$, the task is then to quantify the amount of resource in $\rho$. This is done by introducing a resource monotone $\mathcal{M}:\rho \mapsto [0,\infty)$. Whilst there is no unique way of defining such an $\mathcal{M}$, two standard conditions may be imposed in order to ensure the monotone is meaningful: faithfulness, $\mathcal{M}(\rho) = 0 \iff \rho \in \ff$, and monotonicity under free operations, $\mathcal{M}(\Phi(\rho))\leq \mathcal{M}(\rho)$ for any free operation $\Phi$. It is also instructive to consider monotones with some {\it operational} meaning, i.e. relate the resource content of a state with respect to that monotone to the relative advantage it gives in some specific task of interest.

When working with monotones in the continuous variable case, continuity issues often arise. Intuitively, continuity of a monotone is desirable since if two states can be related by only a slight perturbation, their resource content should be similar. In the finite-dimensional case, this is often trivially given by the compactness of $\ff$, whereas when defined the same way in infinite dimensions, many quantities are discontinuous everywhere \cite{continuity_too_strong}. It is therefore necessary to impose a weaker version of continuity: lower semi-continuity. A monotone $\mathcal{M}$ is lower semi-continuous if it satisfies the property $\liminf_{n\to\infty}\mathcal{M}(\xi_n)\geq \mathcal{M}(\rho)$ for any sequence of states $\xiseq \rightarrow\rho$. This direction of semi-continuity imposes that approximating a state by a sequence of states does not give a lower value of resource than in the actual state, i.e. taking the limit will not increase the resource.

\subsection{Channel discrimination tasks}

In a channel discrimination task, we are provided with a channel ensemble $\{p_i,\Psi_i\}_i$, from which the channel $\Psi_i$ is picked with probability $p_i$ and applied to an input state $\eta$. The task is to correctly guess which channel was picked by measuring the output, and the probability of success can be maximized by optimizing the choice of $\povm$ (positive operator valued measure) implemented at the output. The success probability of correctly guessing from the ensemble $\{p_i,\Psi_i\}_i$ using input state $\eta$ and using POVM $\{M_i\}_i$ is denoted \cite{robustnessofcoherence,finiteconvexrobustness}
\begin{equation}
p_{succ}(\{p_i,\Psi_i\}_i,\{M_i\}_i,\eta):=\sum_i p_i \tr[\Psi_i(\eta) M_i].    
\end{equation}

These tasks are very relevant in quantum information, since any scenario in which we want to discover which of a set of processes are occurring can be reformulated as a channel discrimination task. For example, promising uses of binary discrimination tasks include quantum illumination \cite{IlluminationIntro,GaussianIllumination,ExperimantalIllumination} or quantum reading \cite{ReadingIntro,ReadingAndIllumination,ExperimentalReading}. 

We can further consider the more general case of multi-copy channel discrimination tasks [Fig.~\ref{fig:pringleschannel}], in which the channels $\{\Psi_i\}_i$ act on, say, $m$ copies of an input state $\eta$ \cite{finitenonconvex,longfinitenonconvex}. The probability of success is accordingly denoted as $p_{succ}(\{p_i,\Psi_i\}_i,\{M_i\}_i,\eta^{\otimes m})$.
In general, we say that a state $\rho$ provides an {\it advantage} over another state $\tau$ in a given $m$-copy channel discrimination task $\{p_i,\Psi_i\}_i$, if
\begin{equation}
\begin{aligned}
&\sup_{\{M_i\}_i}p_{succ}(\{p_i,\Psi_i\}_i,\{M_i\}_i,\rho^{\otimes m}) \\  &> \sup_{\{M_i\}_i}p_{succ}(\{p_i,\Psi_i\}_i,\{M_i\}_i,\tau^{\otimes m}).
\end{aligned}
\end{equation}
A nontrivial question is thus whether {\it any} quantum resource will give an advantage over free states in some instance of channel discrimination. In the single-copy scenario ($m=1$), the answer is affirmative provided the set $\ff$ of free states is convex \cite{finiteconvexrobustness}. Extending this result to arbitrary resource theories in the multi-copy scenario is the focus of Section~\ref{sec:TwoCopies}.

\subsection{Continuous variables}\label{sec:CV}
A continuous variable bosonic system is described as a system of $m$ modes, each corresponding to a quantum harmonic oscillator with a different frequency \cite{alessiobook}. In the phase space representation, each mode $i$ has an associated pair of quadrature operators $\hat{x}_i, \hat{y}_i$. The overall system is then described by the collection of these: $\mathbf{\hat{r}}=(\hat{x}_1,\hat{y}_1,...,\hat{x}_m,\hat{y}_m)$. These satisfy the commutation relations $[\hat{r}_j,\hat{r}_k]=i\Omega_{jk}$, where $\boldsymbol{\Omega}$ is the $m$-mode symplectic form, given by 
$$\boldsymbol{\Omega} = \bigoplus_{l=1}^m \begin{pmatrix}
0 & 1\\
-1 & 0
\end{pmatrix}.$$
The eigenvalue associated with eigenvector $\ket{r_i}$ of the quadrature operator $\hat{r}_i$ is denoted $r_i$. 
Information is then encoded in the continuous spectra of these operators.

Continuous variable states can be represented in phase space by their (symmetrically ordered) characteristic function $\chi^{\rho}(\mathbf{r})$, defined as
\begin{equation}
    \chi^{\rho}(\mathbf{r}) = \tr[{\rm e}^{-i\mathbf{r}^\top \boldsymbol{\Omega} \mathbf{\hat{r}}}\rho].
\end{equation}

Alternatively, states can be described by quasi-probability distributions. One such distribution is the Wigner function, given by the Fourier transform of $\chi^{\rho}(\mathbf{r})$, and defined for $m$ modes as 
\begin{equation}
   \mathcal{W}^\rho(\mathbf{r}) = \frac{1}{\pi^{2m}} \int_{\mathbb{R}^{2m}} \chi^\rho(\mathbf{s}) {\rm e}^{i\mathbf{s}^\top\boldsymbol{\Omega} \mathbf{r}} d^m \mathbf{s}.
\end{equation}

Within the continuous variable regime, the set of Gaussian states plays a special role \cite{CVQI_and_beyond}. Gaussian states are defined to be all states with a Gaussian Wigner function 
 \begin{equation}\label{gaussianwigner}
 \mathcal{W}^\rho(\mathbf{r}) = \frac{1}{\pi^m\sqrt{\det(\boldsymbol{V})}}{\rm e}^{{-(\mathbf{r}-\boldsymbol{\mu})^\top \boldsymbol{V}^{-1}(\mathbf{r}-\boldsymbol{\mu})}}
 \end{equation}
 where $\rho$ is uniquely specified by two quantities: the vector of first moments, $\boldsymbol{\mu}:=\tr[\rho \mathbf{\hat{r}}]$, and the covariance matrix, $(\boldsymbol{V})_{jk}:=\tr[\rho\{\hat{r}_j-\mu_j,\hat{r}_k-\mu_k\}]$, satisfying the {\it bona fide} condition $\boldsymbol{V}+i\boldsymbol{\Omega}\geq \boldsymbol{0}$ \cite{alessiobook}.

\section{Operational advantage of general resources in two-copy channel discrimination}\label{sec:TwoCopies}
In this Section, we show that every resource state in any resource theory provides an advantage in some two-copy channel discrimination task, see Fig.~\ref{fig:pringles}. The proof strategy involves directly constructing such a task, using the properties of {\it resource witness} operators. 

When $\ff$ is convex, the hyperplane separation theorem guarantees the existence of a linear (i.e.~acting on a single copy of a state) witness operator \cite{witnessexistence}. 
When the set $\ff$ is non-convex, such a linear operator is no longer guaranteed to exist. We therefore turn to multi-copy witness operators. These were first introduced in entanglement theory \cite{swaptrick}, and amount to a witness operator $W\in\mathcal{B(H}^{\otimes m})$ that acts on multiple copies of a state $\rho$ in order to separate it from $\ff$, i.e. 
\begin{equation}\label{witnessdef}
\begin{array}{l}
\tr[W\rho^{\otimes m}]<0\,, \\
\tr[W\sigma^{\otimes m}]\geq 0 \ \ \forall \sigma\in\ff.
\end{array}
\end{equation}
In Ref.~\cite{finitenonconvex}, a family of multi-copy witnesses based on the generalized Bloch description of finite-dimensional quantum states was used to show an advantage in channel discrimination even for non-convex sets of free states; however, such a construction cannot be straightforwardly extended to continuous variable systems. 

Here we present  a general construction to show that nonetheless multi-copy witnesses always exist in a general finite- or infinite-dimensional resource theory.

\begin{proposition} \label{WitnessExistence}
    For any set of free states $\ff\subseteq\ddh$ that is closed in the trace norm topology and some resource state $\rho\in\ddh \backslash \ff$, there exists a two-copy witness operator $W$ such that $\tr[W\rho^{\otimes 2}]<0$ and $\tr[W\sigma^{\otimes 2}]\geq 0$ for all $\sigma \in \ff$.
\end{proposition}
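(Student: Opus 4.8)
The plan is to exploit the key topological fact recalled in the Preliminaries — that for states the trace norm, Hilbert–Schmidt, and weak* topologies all coincide — together with the Banach–Alaoglu theorem, which makes the trace-norm unit ball of $\tth$ weak*-compact. The idea is to pass from $\ff$ (a subset of $\ddh$) to the set $\{\sigma^{\otimes 2}\colon \sigma\in\ff\}\subseteq\ddh^{\otimes 2}$, take its closed convex hull, and separate $\rho^{\otimes 2}$ from it by a hyperplane. The witness $W$ is the normal vector to that hyperplane, and I must check it can be taken bounded and self-adjoint so that $\tr[W\cdot]$ is a well-defined real functional on $\tth^{\otimes 2}$.

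First I would define $\mathcal{S}:=\overline{\mathrm{conv}}\{\sigma^{\otimes 2}\colon\sigma\in\ff\}\subseteq\ddh^{\otimes 2}$, the closure taken in the trace norm topology (equivalently weak*, on states). This is a closed convex subset of the weak*-compact unit ball, hence itself weak*-compact. Crucially $\rho^{\otimes 2}\notin\mathcal{S}$: if it were, then since tensor-square states are extreme points among product states, one shows $\rho^{\otimes 2}$ being in the closed convex hull of squares of free states would force $\rho$ itself to be a trace-norm limit of free states — here I use that $\sigma\mapsto\sigma^{\otimes 2}$ is continuous and, more delicately, that a convex combination $\sum_k\lambda_k\sigma_k^{\otimes 2}$ close to a pure-type product $\rho^{\otimes 2}$ pins down the $\sigma_k$ near $\rho$ (partial trace plus a purity/rank argument). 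Closedness of $\ff$ then gives $\rho\in\ff$, a contradiction. So $\rho^{\otimes 2}$ and $\mathcal{S}$ are disjoint, with $\mathcal{S}$ weak*-compact and convex.

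Next I would invoke the Hahn–Banach separation theorem in the dual pairing $(\tth^{\otimes 2},\kkh^{\otimes 2})$: since $\mathcal{S}$ is weak*-compact convex and $\{\rho^{\otimes 2}\}$ is a disjoint weak*-closed convex set (a point), there is a weak*-continuous linear functional strictly separating them, i.e.\ a compact operator $W_0\in\kkh^{\otimes 2}$ and a real $c$ with $\tr[W_0\rho^{\otimes2}]<c\le\tr[W_0\sigma^{\otimes2}]$ for all $\sigma\in\ff$. Replacing $W_0$ by its self-adjoint part (legitimate since all the states are self-adjoint, so the traces are real) and absorbing the constant by setting $W:=W_0-c\,\mathbb{1}^{\otimes2}$ — note $\tr[\mathbb{1}^{\otimes2}\tau]=1$ for any state $\tau$, so this shifts every inequality by $-c$ — yields a self-adjoint (though no longer compact) bounded operator $W$ with $\tr[W\rho^{\otimes2}]<0\le\tr[W\sigma^{\otimes2}]$ for all $\sigma\in\ff$, as required.

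The main obstacle I anticipate is the disjointness step $\rho^{\otimes2}\notin\mathcal{S}$. Separation is routine once this is in hand, and boundedness/self-adjointness are cosmetic; but ruling out that $\rho^{\otimes2}$ lies in the \emph{closed convex hull} of free tensor-squares genuinely uses the product structure. The clean way is: suppose $\rho^{\otimes2}=\lim_n\sum_k\lambda_k^{(n)}(\sigma_k^{(n)})^{\otimes2}$ in trace norm; tracing out the second factor gives $\rho=\lim_n\sum_k\lambda_k^{(n)}\sigma_k^{(n)}$, so $\rho$ is a limit of (finite) convex combinations of free states — which is \emph{not} enough by itself when $\ff$ is non-convex. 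To upgrade, I would test against $W_{\mathrm{SWAP}}$-type observables or use that $\tr[(\rho^{\otimes2})\,\mathbb{1}\otimes A]=\tr[\rho A]$ while $\tr[\sum_k\lambda_k\sigma_k^{\otimes2}\,A\otimes A]=\sum_k\lambda_k\tr[\sigma_k A]^2\ge(\sum_k\lambda_k\tr[\sigma_k A])^2$ with equality only if all $\tr[\sigma_k A]$ coincide — running this over a separating family of $A$'s forces the $\sigma_k^{(n)}$ into arbitrarily small trace-norm neighbourhoods of a common state, necessarily $\rho$, and then closedness of $\ff$ finishes it. Making this ``concentration'' argument fully rigorous in the weak*-compactness framework (extracting convergent subnets, controlling the weights $\lambda_k^{(n)}$) is where the real work lies.
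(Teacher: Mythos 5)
Your route (Hahn--Banach separation of $\rho^{\otimes 2}$ from the closed convex hull of free tensor-squares) is genuinely different from the paper's, which simply writes down the explicit operator $W(\rho,\varepsilon)=\swap+(\tr[\rho^2]-\varepsilon)\mathds{1}^{\otimes 2}-2\rho\otimes\mathds{1}$, notes that $\tr[W\eta^{\otimes 2}]=\tr[(\rho-\eta)^2]-\varepsilon$, and checks that $\inf_{\sigma\in\ff}\tr[(\rho-\sigma)^2]>0$ because otherwise some free sequence would converge to $\rho$ in Hilbert--Schmidt norm, hence (states to a state) in trace norm, contradicting closedness of $\ff$. As it stands, your version has two genuine gaps. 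First, the topological setup of the separation is wrong: a trace-norm-closed convex subset of the unit ball of $\tth^{\otimes 2}$ need \emph{not} be weak*-closed (the set of states $\ddh$ itself is norm-closed and convex but not weak*-compact in infinite dimensions, since mass can escape to infinity under weak* limits), so you cannot conclude that $\mathcal{S}$ is weak*-compact and you cannot guarantee a separating functional in $\kkh^{\otimes 2}$. This is repairable --- separate in the trace-norm topology instead, where the dual is $\bbh^{\otimes 2}$ (Mazur/Hahn--Banach for a point versus a norm-closed convex set), yielding a bounded rather than compact $W_0$, which is all the proposition needs --- but the argument as written does not go through.

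Second, and more importantly, the disjointness step $\rho^{\otimes 2}\notin\mathcal{S}$, which you correctly identify as carrying all the content, is not closed by your sketch. Vanishing of the variance of $\tr[\sigma_k^{(n)}A]$ for each fixed $A$ in a separating family does not yield uniform trace-norm concentration of the $\sigma_k^{(n)}$ around a common state: the rate of concentration may depend on $A$, and extracting a single free sequence converging to $\rho$ from countably many pointwise statements requires an additional compactness or diagonal argument you have not supplied. The clean way to close it is to test the convex combinations against the single bounded observable $\swap-2\rho\otimes\mathds{1}+\tr[\rho^2]\,\mathds{1}^{\otimes 2}$, which evaluates to $\sum_k\lambda_k^{(n)}\tr[(\rho-\sigma_k^{(n)})^2]\to 0$ and hence produces free states Hilbert--Schmidt-close to $\rho$, contradicting closedness --- but at that point you have rediscovered the paper's witness itself, and the separation machinery becomes superfluous. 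I would recommend adopting the direct construction: it is shorter, gives an explicit (and experimentally meaningful) $W$, and avoids the weak*-closure subtleties entirely.
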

\begin{proof}
    Consider the two-copy  operator given by 
    \begin{equation} \label{lockness}
    W(\rho, \varepsilon) = \swap+(\tr[\rho^2]-\varepsilon)\mathds{1}\otimes \mathds{1}-2\rho \otimes \mathds{1}
    \end{equation}
where the $\swap$ operator is defined via $\swap \ket{\phi}\otimes \ket{\psi} = \ket{\psi}\otimes \ket{\phi}$ and has the property $\tr[\swap\ \eta^{\otimes 2}] = \tr[\eta^2]$ \cite{swaptrick}. When evaluated on two copies of a state $\eta$, this gives \cite{finitenonconvex}
\begin{equation}\tr[W\eta^{\otimes 2}] = \tr[(\rho-\eta)^2]-\varepsilon
\end{equation}
and is therefore a witness provided $\varepsilon$ is picked such that $0<\varepsilon \leq \tr[(\rho -\sigma)^2]$ for all $\sigma \in \ff$. 

We now need to argue that such an $\varepsilon$ can always be chosen. To this end, assume towards a contradiction that no such $\varepsilon$ exists, that is, $\tr[(\rho -\sigma)^2]$ can be made arbitrarily small. This means there exists some sequence $\{\sigma_n\}_n$ in $\ff$ such that $\sigma_n \xrightarrow{\text{H-S}} \rho$, and hence also implies $\sigma_n \xrightarrow{\text{tn}} \rho$. This contradicts the assumption that $\ff$ is closed in the trace norm topology, and therefore such an $\varepsilon$ must always exist.
\end{proof}

\begin{remark}
    The two-copy witness (\ref{lockness})  is based on the distance $\tr[(\rho-\eta)^2]$. The construction in Proposition~\ref{WitnessExistence} can be generalized to define a family of multi-copy witnesses based on the distance $\tr[(\rho-\eta)^{2n}]$ for any integer $n\geq 1$.
\end{remark}
    
    Explicitly, the aim is to construct a Hermitian operator $W_{2n}(\rho,\varepsilon)$ such that 
    \begin{equation}
    \begin{aligned}
    &\tr\left[W_{2n}\eta^{\otimes 2n}\right]\\
    &=\tr\left[(\rho-\eta)^{2n}\right]-\varepsilon\\
    &=(\tr[\rho^{2n}]-\varepsilon)+\tr\left[\sum_{k=1}^{2n}(-1)^k\!\!\!\!\sum_{permutations}\!\!\!\!\rho^{2n-k}\eta^k\right].
    \end{aligned}
    \end{equation}

        The first term in the final equation is achieved by the operator $(\tr[\rho^{2n}]-\varepsilon)\mathds{1}^{\otimes 2n}$.
    To construct the remaining terms, we use the $m$-party unitary shift operator $V_m\in \mathcal{B(H}^{\otimes m})$, which has the property $\tr[V_m (A_1\otimes A_2\otimes \ldots\otimes A_m)]=\tr[A_1A_2\ldots A_m]$ for $A_1,\ldots,A_m\in\tth$ \cite{Isham_1994}. Notice that $V_2 \equiv \swap$, while $V_m \neq V_m^{\dagger}$ for $m >2$. A term of the form  $\tr[\rho^{a_1}\eta^{b_1}\rho^{a_2}\eta^{b_2}\ldots\rho^{a_k}\eta^{b_k}]$ can then be achieved using an operator of the form 
    \begin{equation}
        \begin{aligned}
        \Big(V_{(\sum_{i=1}^kb_i)}\rho^{a_1}\otimes\mathds{1}^{\otimes b_1-1}\otimes \rho^{a_2}\otimes \mathds{1}^{\otimes b_2-1}\otimes\ldots\\
        \ldots\otimes\rho^{a_k}\otimes\mathds{1}^{\otimes b_k-1}\Big)\otimes\mathds{1}^{\otimes (2n-\sum_{i=1}^kb_i)}
        \end{aligned}
   \end{equation}
    where we take $V_1:=\mathds{1}$. An alternating sum of these operators will complete the construction of a quasi-witness $\tilde{W}_{2n}$ (referring to the fact that $\tilde{W}_{2n}$ identifies a resource but is not itself Hermitian). Since $\tr[\tilde{W}_{2n}\eta^{\otimes 2n}]=\tr[(\rho-\eta)^{2n}]-\varepsilon$ is real valued, we note $\tr[\tilde{W}_{2n}\eta^{\otimes 2n}]=\tr[\frac12(\tilde{W}_{2n}+\tilde{W}_{2n}^\dagger)\eta^{\otimes 2n}]$ \cite{swaptrick}, and thus a Hermitian $2n-$copy witness can be constructed as 
    \begin{equation}
    W_{2n}=\frac12(\tilde{W}_{2n}+\tilde{W}_{2n}^\dagger).
    \end{equation}
    For example, a general $4-$copy witness is given by 
    \begin{equation}
W_4(\rho,\varepsilon)=\frac12(\tilde{W}_4+\tilde{W}_4^\dagger),
    \end{equation}
    where
    \begin{equation}
\begin{aligned}
\tilde{W}_4:=&(\tr[\rho^4]-\varepsilon)\mathds{1}^{\otimes 4}-4\rho^3\otimes\mathds{1}^{\otimes3}\\&+4(V_2\otimes\mathds{1}^{\otimes 2})(\rho^2\otimes\mathds{1}^{\otimes3})
+2V_2\rho^{\otimes 2}\otimes\mathds{1}^{\otimes 2}\\&-4(V_3\otimes\mathds{1})(\rho\otimes\mathds{1}^{\otimes3})+V_4,
\end{aligned}
\end{equation}
and is such that $\tr[W_4\eta^{\otimes 4}]=\tr[(\rho-\eta)^4]-\varepsilon$. The observable $W_4(\rho,\varepsilon)$ is therefore a resource witness provided we pick $\varepsilon$ such that $0<\varepsilon\leq\tr[(\rho-\eta)^4]$: this can always be done, even in infinite dimensions, via an analogous argument to that used in Proposition~\ref{WitnessExistence}.

Our analysis so far in this Section shows that, although there may not exist a linear witness operator in some resource theory, there will always exist a multi-copy witness operator. Interestingly, it also shows we only require two copies of the states involved to be able to strictly separate $\rho$ from $\ff$, regardless of the system's dimension and including infinite-dimensional theories.

We can now use this result to show that, for any resource state $\rho$, there always exists a multi-copy channel discrimination task for which $\rho$ performs strictly better than any free state.

\begin{theorem} \label{DiscriminationAdvantage}
     $\rho \in \ddh \backslash \ff$ if and only if there exist two channels $\Psi_0, \Psi_1$ acting on $\mathcal{D(H}^{\otimes 2})$ such that 
    \begin{equation*}\frac{\sup_{\{M_i\}_i}p_{succ}(\{p_i,\Psi_i\}_{i=0,1},\{M_i\}_i,\rho^{\otimes 2})}{\sup_{\sigma\in\ff}\sup_{\{M_i\}_i}p_{succ}(\{p_i,\Psi_i\}_{i=0,1},\{M_i\}_i,\sigma^{\otimes 2})}>1.\end{equation*}
\end{theorem}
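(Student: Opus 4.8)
The plan is to leverage Proposition~\ref{WitnessExistence}: given $\rho \notin \ff$, fix a two-copy witness $W = W(\rho,\varepsilon)$ of the form \eqref{lockness}, so that $\tr[W\rho^{\otimes 2}] < 0$ while $\tr[W\sigma^{\otimes 2}] \geq 0$ for all $\sigma \in \ff$. The standard trick (used in the convex finite-dimensional setting) is to turn this witness into a binary channel discrimination task by rescaling $W$ into the difference of two channel actions. Concretely, I would choose channels $\Psi_0, \Psi_1$ on $\mathcal{D(H}^{\otimes 2})$ and a prior so that, up to an affine reparametrization, the quantity $\sup_{\{M_i\}_i} p_{succ} - \tfrac12$ is proportional to $-\tr[W\eta^{\otimes 2}]$ evaluated on the input $\eta^{\otimes 2}$. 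One natural realization: write $W = \alpha\mathds{1}^{\otimes 2} - \beta\,\Lambda$ where $\Lambda$ is (proportional to) a channel applied to $\eta^{\otimes 2}$ — note $\swap$ and $\rho\otimes\mathds{1}$ are, after suitable normalization and mixing with the maximally mixed output or a depolarizing component, expressible via completely positive trace-preserving maps — and then the Holevo–Helstrom bound gives $\sup_{\{M_i\}_i} p_{succ}(\{p_i,\Psi_i\},\{M_i\}_i,\eta^{\otimes 2}) = \tfrac12 + \tfrac14\|p_0\Psi_0(\eta^{\otimes 2}) - p_1\Psi_1(\eta^{\otimes 2})\|_1$, and I would arrange the construction so that this trace-norm term dominates a linear functional that reproduces $-\tr[W\eta^{\otimes 2}]$.

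The cleaner route, which I would actually pursue, mirrors Ref.~\cite{finitenonconvex}: decompose the Hermitian witness $W$ into positive and negative parts, $W = W_+ - W_-$ with $W_\pm \geq 0$, and use these (after normalizing their traces) to define the two channels — e.g. $\Psi_i(\cdot) = \tr[\,\cdot\,]\,\tau_i$ for fixed output states $\tau_i$ built from $W_\pm$, or replace-and-measure channels — together with a prior $\{p_0, p_1\}$ chosen proportional to $\tr W_\mp$. Then $p_0\Psi_0(\eta^{\otimes 2}) - p_1\Psi_1(\eta^{\otimes 2})$ is a fixed multiple of the operator obtained by "plugging $\eta^{\otimes 2}$ into $W$", and the optimal success probability becomes an affine increasing function of $-\tr[W\eta^{\otimes 2}]$. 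Since $\tr[W\rho^{\otimes 2}] < 0 \leq \tr[W\sigma^{\otimes 2}]$ for every $\sigma \in \ff$, taking the supremum over $\sigma$ still leaves $\sup_\sigma(-\tr[W\sigma^{\otimes 2}]) \leq 0 < -\tr[W\rho^{\otimes 2}]$, which yields the strict inequality in the ratio. The converse direction is immediate: if $\rho \in \ff$, then $\rho^{\otimes 2} \in \ff^{\otimes 2}$ (or at least $\rho^{\otimes 2}$ is among the free-input candidates $\sigma^{\otimes 2}$), so the numerator can never strictly exceed the denominator and the ratio is $\leq 1$.

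The main obstacle is the infinite-dimensional subtlety: I must ensure the channels $\Psi_0, \Psi_1$ are genuinely completely positive and trace-preserving as maps on $\mathcal{T(H}^{\otimes 2})$, which requires that the operators extracted from $W$ (namely $\swap$, $\rho\otimes\mathds{1}$, and $\mathds{1}\otimes\mathds{1}$ with their coefficients) be recombined so that $W_\pm$ are bounded positive operators with finite trace — note $\swap$ is bounded but not trace class, and $\rho\otimes\mathds{1}$ is not trace class either, so a naive "$W_\pm$ as output states" construction fails. The fix is to realize the linear functional $\eta^{\otimes 2} \mapsto \tr[W\eta^{\otimes 2}]$ not through replacer channels with fixed trace-class outputs but through measure-and-prepare or unitary-conjugation channels: e.g. the $\swap$ term is implemented by the unitary channel $\eta^{\otimes 2}\mapsto \swap\,\eta^{\otimes 2}\,\swap$ composed with a fixed measurement, and the $\rho\otimes\mathds{1}$ term via a measurement with POVM element proportional to $\rho\otimes\mathds{1}$ (which is a valid bounded effect after rescaling by $\|\rho\|_\infty^{-1}\leq 1$), and then bundle everything into a single binary POVM on a suitably enlarged output so that Holevo–Helstrom applies. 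Verifying that all pieces are legitimate CPTP maps / valid POVMs in the infinite-dimensional setting — in particular checking boundedness of each effect and trace-preservation — is the one place where care beyond the finite-dimensional argument of Ref.~\cite{finitenonconvex} is needed; once that bookkeeping is done, the strict separation $\tr[W\rho^{\otimes 2}]<0\le\tr[W\sigma^{\otimes 2}]$ does all the real work.
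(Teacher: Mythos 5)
Your overall strategy---take the two-copy witness $W$ from Proposition~\ref{WitnessExistence}, convert the functional $\eta^{\otimes 2}\mapsto\tr[W\eta^{\otimes 2}]$ into a binary channel discrimination task, and let Holevo--Helstrom plus the strict separation $\tr[W\rho^{\otimes 2}]<0\le\tr[W\sigma^{\otimes 2}]$ do the work---is exactly the paper's. However, the construction you say you would actually pursue fails, and not for the reason you identify. Replacer channels $\Psi_i(\cdot)=\tr[\cdot]\,\tau_i$ produce outputs that are independent of the input state, so $p_{succ}$ cannot depend on $\eta$ at all and the ratio of success probabilities is identically $1$; no choice of fixed outputs $\tau_i$ or priors built from $W_\pm$ can repair this. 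The obstruction is structural, not the trace-class bookkeeping you flag (although it is true that $\swap$ and $\rho\otimes\mathds{1}$ are not trace class). Your third paragraph pivots to measure-and-prepare channels, which is the right family, but the implementation you sketch---separate unitary-conjugation and POVM gadgets for the $\swap$ and $\rho\otimes\mathds{1}$ terms, bundled on an enlarged output---is left as unverified ``bookkeeping'' precisely where the actual work lies.

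The paper's construction is far simpler and sidesteps all of this. Since $W$ is bounded, set $X:=\mathds{1}^{\otimes 2}-W/\|W\|_{\infty}$, so that $\tr[X\rho^{\otimes 2}]>1$ while $0\le\tr[X\sigma^{\otimes 2}]\le 1$ for all $\sigma\in\ff$, and observe that $\bigl\{\tfrac12\bigl(\mathds{1}^{\otimes 2}+X/\|X\|_{\infty}\bigr),\tfrac12\bigl(\mathds{1}^{\otimes 2}-X/\|X\|_{\infty}\bigr)\bigr\}$ is a legitimate two-outcome POVM simply because $X$ is a bounded positive operator with $\|X/\|X\|_{\infty}\|_{\infty}=1$. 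The channels $\Psi_0,\Psi_1$ measure this POVM and prepare one of two orthogonal flag states $\ket{0},\ket{1}$, with the labelling swapped between the two channels; they are manifestly completely positive and trace preserving on $\mathcal{T}(\mathcal{H}^{\otimes 2})$, no positive/negative decomposition of $W$ and no enlarged output space are needed, and Holevo--Helstrom with equal priors gives $\sup_{\{M_i\}_i}p_{succ}=\tfrac12\bigl(1+\tr[X\eta^{\otimes 2}]/\|X\|_{\infty}\bigr)$, from which the strict ratio $>1$ is immediate. Your converse direction (if $\rho\in\ff$ then $\rho$ appears among the candidates in the denominator, so the ratio is at most $1$) is fine.
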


\begin{proof}

The ``if" direction follows directly from the assumption that $\ff$ is closed. For the ``only if" direction, we directly construct a channel discrimination task for which $\rho\in \ddh \backslash \ff$ gives an advantage. Given $\rho$, take any bounded, two-copy witness operator (the existence of which is shown in Proposition \ref{WitnessExistence}) and define $X:=\mathds{1}^{\otimes 2} - \frac{W}{\|W\|_{\infty}}$. Note $X$ has the properties $\tr[X\rho^{\otimes 2}]>1$, $0\leq \tr[X\sigma^{\otimes 2}]\leq 1$ for all $\sigma\in\ff$.
Now consider the two quantum channels given by 
    \begin{equation*}
    \begin{aligned}
    &\Psi_0(\eta^{\otimes 2}):= \mbox{$\left(\frac{1}{2} + \frac{\tr[X\eta^{\otimes 2}]}{2\|X\|_{\infty}}\right)\ket{0}\!\bra{0} +\left(\frac{1}{2} - \frac{\tr[X\eta^{\otimes 2}]}{2\|X\|_{\infty}}\right) \ket{1}\!\bra{1}$}\\
    &\Psi_1(\eta^{\otimes 2}):=  \mbox{$\left(\frac{1}{2} - \frac{\tr[X\eta^{\otimes 2}]}{2\|X\|_{\infty}}\right)\ket{0}\!\bra{0} + \left(\frac{1}{2} + \frac{\tr[X\eta^{\otimes 2}]}{2\|X\|_{\infty}}\right)\ket{1}\!\bra{1}$}
    \end{aligned}
    \end{equation*}
    where $\ket{0},\ket{1}$ are mutually orthogonal states.
    Consider the task of determining which of these two channels was applied when there was an equal probability of picking either channel. By the Holevo-Helstrom theorem \cite{Holevo1973,Helstrom1976}, the maximum success probability of correctly guessing is 
    \begin{equation}
    \begin{aligned}
        \sup_{\{M_i\}_i}\ &p_{succ}\left(\{\tfrac{1}{2},\Psi_i\}_i,\{M_i\}_i, \eta^{\otimes 2}\right) \\ &= \frac{1}{2}\left(1+\frac{1}{2} \|(\Psi_0 - \Psi_1)\eta^{\otimes  2}\|_1\right)\\
        &=\frac{1}{2}\left(1+\frac{\tr[X\eta^{\otimes 2}]}{\|X\|_{\infty}}\right)\\
        & \begin{cases}
            > \frac{1}{2}\left(1+\frac{1}{\|X\|_{\infty}}\right) & \text{for} \ \eta = \rho\\
            \leq \frac{1}{2}\left(1+\frac{1}{\|X\|_{\infty}}\right) & \text{for} \ \eta = \sigma \in \ff
        \end{cases}
    \end{aligned}
    \end{equation}
    and therefore the ratio of success probabilities is
    \begin{equation*}
    \begin{aligned}
       & \frac{\sup_{\{M_i\}_i}p_{succ}(\{p_i,\Psi_i\}_{i=0,1},\{M_i\}_i,\rho^{\otimes 2})}{\sup_{\sigma\in\ff}\sup_{\{M_i\}_i}p_{succ}(\{p_i,\Psi_i\}_{i=0,1},\{M_i\}_i,\sigma^{\otimes 2})} \\ & \qquad >\frac{\frac{1}{2}\left(1+\frac{1}{\|X\|_{\infty}}\right)}{\frac{1}{2}\left(1+\frac{1}{\|X\|_{\infty}}\right)} = 1.
    \end{aligned}\vspace*{-0.4cm}
    \end{equation*} 
\end{proof}

This shows an operational advantage of every resource state in any resource theory in the channel discrimination setting. Remarkably, we only need two copies of the states to unlock such an advantage, even in the most general scenario. Since the only assumption we make is that $\ff$ is closed, this result applies to every resource theory, generalizing and complementing  the finite-dimensional, non-convex case \cite{finitenonconvex,longfinitenonconvex} and the infinite-dimensional, convex case \cite{infiniteconvex,longinfiniteconvex}.

\section{Generalized robustness monotones in non-convex, continuous variable resource theories}\label{sec:generalroberto}

We begin this Section by introducing the generalized robustness monotones considered in this paper and discussing their properties.

\begin{definition}[Generalized robustness]\label{robDef} The generalized robustness of a state $\rho$ with respect to the set of free states $\ff$ is given by 
\begin{equation}
    \rob :=\inf_{\tau\in\ddh}\left\{\lambda\in \mathbb{R}_{\geq 0} \colon \frac{\rho + \lambda \tau}{1+\lambda} = \sigma \in \ff \right\}.
\end{equation}
\end{definition}

The generalized robustness monotone was first introduced in entanglement theory \cite{original_robustness}, and later extended to arbitrary resource theories in finite dimensions \cite{finiteconvexrobustness} and convex resource theories within the framework of general probabilistic theories \cite{GPTRobustness}. The generalized robustness $\rob$ quantifies the amount of resource in a state $\rho$ by the amount of noise that can be added via mixing with another state before all the resource is lost.

\begin{remark}The generalized robustness can equivalently be expressed as
\begin{equation}\label{robDmax}
\begin{aligned} 
    \rob &= \inf_{\sigma \in \ff} \left\{\lambda\in \mathbb{R}_{\geq 0} : \rho \leq (1+\lambda) \sigma \right\} \\
    & = \inf_{\sigma \in \ff} \left\{\exp\big[D_{\max}(\rho \| \sigma)\big] -1 \right\}\,,
\end{aligned}
\end{equation}
where
\begin{equation}\label{Dmax}
    D_{\max}(\rho \| \sigma) := \inf \{\gamma \in \mathbb{R} : \rho \leq {\rm e}^\gamma\ \sigma\},
\end{equation} is the {\it max-relative entropy} between $\rho$ and $\sigma$ \cite{min-maxRE,datta2009max}.  
\end{remark}

As an operational interpretation, when used in discrete variable theories, $\rob$ quantifies the maximal potential advantage given by a resource state $\rho$ in single-copy channel discrimination tasks, provided $\ff$ is convex \cite{finiteconvexrobustness}. In the case where $\ff$ is non-convex, it has been shown to quantify the worst case maximal advantage in single-copy channel discrimination tasks, with respect to a decomposition of $\ff$ into convex subsets \cite{finitenonconvex}.

When used in continuous variable theories, $\rob$ as in Definition \ref{robDef} is no longer guaranteed to be lower semi-continuous. We therefore resort to a modified definition that characterizes the generalized robustness in such a way that it is automatically lower semi-continuous:

\begin{definition}[Lower semi-continuous generalized robustness]\label{lscrDef}
The lower semi-continuous generalized robustness of a state $\rho$ with respect to $\ff$ is given by \cite{infiniteconvex,note1inf}
 \begin{equation}\label{lscr}
 \begin{aligned}
\!\!\!\!\lscr &:= \liminf_{\xi\rightarrow\rho}\mathcal{R_F}(\xi)\\
 &\:= \inf_{\{\tau_n\}_n, \xiseq}\left\{\lambda \in \mathbb{R}_{\geq 0} \colon \frac{\xi_n + \lambda \tau_n}{1+\lambda} = \sigma_n \in \ff, \right. \\
 &\qquad \qquad\!\! \quad\left.\tau_n, \xi_n \in \ddh, \  \xiseq \xrightarrow{\textup{tn}} \rho \right\}.
 \end{aligned}
\end{equation}
\end{definition}
Here we optimize over all possible sequences of states $\xiseq$ such that $\xi_n\xrightarrow{\text{tn}}\rho$.

So far, this lower semi-continuous robustness $\lscr$ has only been studied for convex $\ff$ \cite{infiniteconvex,longfinitenonconvex}. Whilst it is no longer a convex function when this assumption is removed, we can show that $\lscr$ is still monotonic and faithful.

\begin{proposition}\label{propoprops}
  The resource quantifier  $\lscr$  defined with respect to a (not necessarily convex) set $\ff$ is both {\em (a)} monotonic under free operations and {\em (b)} faithful.
\end{proposition}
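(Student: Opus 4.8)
The plan is to prove the two properties of $\lscr$ separately, leveraging the variational expression in Definition~\ref{lscrDef} and the fact that $\lscr$ is by construction the lower semi-continuous hull of $\rob$.

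\textbf{Faithfulness.} For one direction, if $\rho\in\ff$, then taking the constant sequences $\xi_n=\rho$, $\tau_n=\rho$ gives $\sigma_n=\rho\in\ff$ with $\lambda=0$, so $\lscr=0$. For the converse, suppose $\lscr=0$. Then by the variational formula there exist sequences $\xiseq\xrightarrow{\text{tn}}\rho$ and $\{\tau_n\}_n$, and a sequence $\lambda_n\to 0$ (or $\lambda_n=0$), such that $\sigma_n:=\frac{\xi_n+\lambda_n\tau_n}{1+\lambda_n}\in\ff$. One then estimates $\|\sigma_n-\xi_n\|_1$ in terms of $\lambda_n$: since $\sigma_n-\xi_n=\frac{\lambda_n}{1+\lambda_n}(\tau_n-\xi_n)$ and $\|\tau_n-\xi_n\|_1\leq 2$, we get $\|\sigma_n-\xi_n\|_1\leq \frac{2\lambda_n}{1+\lambda_n}\to 0$. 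Combined with $\|\xi_n-\rho\|_1\to 0$, the triangle inequality gives $\sigma_n\xrightarrow{\text{tn}}\rho$ with $\sigma_n\in\ff$; since $\ff$ is closed in trace norm, $\rho\in\ff$. Here one must be slightly careful that the infimum being $0$ may only be attained in the limit, so one should extract the sequences realizing values $\lambda_n\to 0$ via a diagonal argument over the double sequence $\{\tau_n\},\xiseq$ — but this is routine.

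\textbf{Monotonicity.} Let $\Phi$ be a free operation, so $\Phi(\ff)\subseteq\ff$, and recall that a quantum channel $\Phi$ is trace-norm continuous (contractive: $\|\Phi(A)\|_1\leq\|A\|_1$). Given any sequences $\xiseq\xrightarrow{\text{tn}}\rho$ and $\{\tau_n\}_n$ with $\frac{\xi_n+\lambda\tau_n}{1+\lambda}=\sigma_n\in\ff$ realizing a value close to $\lscr$, apply $\Phi$ to obtain $\frac{\Phi(\xi_n)+\lambda\Phi(\tau_n)}{1+\lambda}=\Phi(\sigma_n)\in\ff$, with $\Phi(\xi_n),\Phi(\tau_n)\in\ddh$ and $\Phi(\xi_n)\xrightarrow{\text{tn}}\Phi(\rho)$ by contractivity. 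Hence $(\Phi(\xi_n),\Phi(\tau_n))$ is a feasible pair in the optimization defining $\underline{\mathcal{R}_{\mathcal{F}}}(\Phi(\rho))$ with the same $\lambda$, so $\underline{\mathcal{R}_{\mathcal{F}}}(\Phi(\rho))\leq\lambda$; taking the infimum over feasible data for $\rho$ yields $\underline{\mathcal{R}_{\mathcal{F}}}(\Phi(\rho))\leq\lscr$. Equivalently, and perhaps more cleanly, one can first note that $\rob$ itself is monotone (if $\rho\leq(1+\lambda)\sigma$ then $\Phi(\rho)\leq(1+\lambda)\Phi(\sigma)$ by positivity of $\Phi$, using the $D_{\max}$ form~\eqref{robDmax}), and then observe that the lower semi-continuous hull of a monotone, optimized over trace-norm-convergent sequences, inherits monotonicity precisely because $\Phi$ maps convergent sequences to convergent sequences: $\underline{\mathcal{R}_{\mathcal{F}}}(\Phi(\rho))=\liminf_{\xi\to\Phi(\rho)}\rob(\xi)\leq\liminf_{\xi\to\rho}\rob(\Phi(\xi))\leq\liminf_{\xi\to\rho}\rob(\xi)=\lscr$.

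The main obstacle I anticipate is not conceptual but bookkeeping: carefully handling the double-sequence infimum in Definition~\ref{lscrDef} so that the feasibility data transported through $\Phi$ (respectively, used to conclude closure in the faithfulness proof) genuinely realizes the infimum in the limit. A diagonal extraction over $(n,$ choice of $\xi,\tau)$ settles this, and one should state explicitly that $\rho\leq(1+\lambda)\sigma$ is preserved under the positive, trace-preserving map $\Phi$, and that $\ff$ being closed in trace norm is exactly what is needed to close the faithfulness argument. No compactness or convexity of $\ff$ is used anywhere.
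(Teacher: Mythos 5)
Your proposal is correct and follows essentially the same route as the paper: faithfulness via the estimate $\|\sigma_n-\xi_n\|_1\lesssim 2\lambda_n$ combined with the triangle inequality and trace-norm closedness of $\ff$, and monotonicity by pushing the feasible sequences through $\Phi$ using positivity (to preserve the operator inequality $\xi_n\leq(1+\lambda)\sigma_n$) and trace-norm contractivity (to preserve convergence to $\Phi(\rho)$). The diagonal-extraction caveat you flag is handled in the paper by the double-indexed sequences $\xi_n^{(k)},\sigma_n^{(k)}$ with $\lambda^{(k)}\to 0$, exactly as you anticipate.
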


\begin{proof}
We prove each claim individually.

 To establish (a), from the definition of $\lscr$ we have $\forall \varepsilon > 0$, there exist sequences $\{\sigma(\varepsilon)_{n}\}_n$ in $\ff$ and $\{\xi(\varepsilon)_n\}_n \rightarrow\rho$ such that $\xi_n \leq (1+\lscr + \varepsilon)\sigma_n$ for each $n$. For any completely positive, linear, trace non-increasing free operation $\Phi$, positivity  implies $\Phi(\xi_n)\leq (1+\lscr + \varepsilon)\Phi(\sigma_n)$, and contractivity of the trace norm under $\Phi$ means that $\Phi(\xi_n)\rightarrow \Phi(\rho)$ also. Hence $\lscr$ is a suboptimal value for $\underline{\mathcal{R_F}}(\Phi(\rho))$, i.e. $\underline{\mathcal{R_F}}(\Phi(\rho))\leq \lscr$.
    
 To see (b) holds, if $\rho \in \ff$, clearly  $\lscr = 0$. For the converse, $\lscr=0$ implies that there exists a sequence $\lambda^{(k)} \to 0$ such that $\xi_n^{(k)}-\sigma_n^{(k)} = \lambda^{(k)}(\sigma_n^{(k)} - \tau_n^{(k)})$. For fixed $k$, we have $\|\xi_n^{(k)}-\sigma_n^{(k)}\|_1\leq 2\lambda^{(k)}$, and thus $\|\rho-\sigma_n^{(k)}\|_1 \leq \|\rho - \xi_n^{(k)}\|_1 + \|\xi_n^{(k)}-\sigma_n^{(k)}\|_1 \leq \|\rho - \xi_n^{(k)}\|_1 + 2\lambda^{(k)}$. Taking the limits $n\to \infty$ and $k\to \infty$ we see  that $\sigma_n^{(k)}$ converges to $\rho$, hence $\rho\in\ff$.
\end{proof}

This means $\lscr$ remains a valid resource quantifier in any resource theory, even in the continuous variable case when $\ff$ is non-convex.

A natural question is then to ask when Definitions \ref{robDef} and \ref{lscrDef} are equal. This has already been shown in Ref.~\cite[Proposition~7]{longinfiniteconvex} under the condition that $\ff$ is compact, and therefore the two definitions coincide in finite-dimensional resource theories. With the assumption that $\ff$ is convex, they are also shown to be equal if $\cone(\ff)$ is weak* closed \cite[Theorem~12]{longinfiniteconvex}. Our next result provides a proof that this equivalence in fact holds even in the non-convex case.

\begin{proposition}\label{ClosedCone}
    If $\cone({\ff})$ is closed in the weak* topology, then $\lscr = \rob$.
\end{proposition}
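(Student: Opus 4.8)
The plan is to show the two non-trivial inequalities $\lscr \le \rob$ and $\rob \le \lscr$ separately, with the weak* closedness of $\cone(\ff)$ entering only in the second, harder direction. The inequality $\lscr \le \rob$ is essentially free: for any feasible $\sigma = (\rho + \lambda\tau)/(1+\lambda) \in \ff$ witnessing a value $\lambda$ for $\rob$, one can take the constant sequences $\xi_n = \rho$, $\tau_n = \tau$, $\sigma_n = \sigma$, which are admissible in the infimum defining $\lscr$; hence $\lscr \le \lambda$, and taking the infimum over all such $\lambda$ gives $\lscr \le \rob$. (Equivalently, this is just the statement $\liminf_{\xi \to \rho} \mathcal{R_F}(\xi) \le \mathcal{R_F}(\rho)$, using the constant sequence.)

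For the reverse inequality, I would start from a near-optimal family for $\lscr$: fix $\varepsilon > 0$, set $\lambda := \lscr + \varepsilon$, and pick sequences $\xi_n \xrightarrow{\text{tn}} \rho$, $\tau_n \in \ddh$, $\sigma_n \in \ff$ with $\xi_n + \lambda\tau_n = (1+\lambda)\sigma_n$, i.e. $\xi_n \le (1+\lambda)\sigma_n$. The idea is to extract a limit of $\{\sigma_n\}_n$ and show it gives a feasible point for $\rob$ with parameter $\lambda$. By the Banach–Alaoglu theorem (the trace-norm unit ball of $\tth$ is weak* compact, as recalled in the preliminaries), the bounded sequence $\{\sigma_n\}_n \subseteq \ddh$ has a weak* convergent subsequence; relabel so that $\sigma_n \xrightarrow{\text{w*}} \sigma$ for some $\sigma \in \tth$ with $\|\sigma\|_1 \le 1$. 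Now consider $(1+\lambda)\sigma_n - \xi_n \ge 0$: this is a sequence of positive trace-class operators, each of trace $(1+\lambda) - 1 = \lambda$ (using $\tr\sigma_n = \tr\xi_n = 1$), so it lies in $\lambda$ times the set of states; since $\xi_n \xrightarrow{\text{tn}} \rho$ implies $\xi_n \xrightarrow{\text{w*}} \rho$, we get $(1+\lambda)\sigma_n - \xi_n \xrightarrow{\text{w*}} (1+\lambda)\sigma - \rho =: \lambda\mu$. Weak* limits of positive operators are positive, so $\mu \ge 0$; and $\tr\mu$: here one must be careful, since trace is not weak* continuous — but one can argue $\tr\mu \le \lambda$ (lower semicontinuity of the trace norm, which for positive operators is the trace) and combine with $\tr\mu \ge 0$; the shortfall, if any, can be absorbed by renormalizing or by noting it forces $\lambda = 0$, i.e. $\rho = \sigma \in \ff$ directly. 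Then $\rho = (1+\lambda)\sigma - \lambda\mu$, i.e. $(\rho + \lambda\mu)/(1+\lambda) = \sigma$, so $\sigma$ is feasible for $\rob$ with parameter $\lambda$ — provided we know $\sigma \in \ff$.

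The crux — and where weak* closedness of $\cone(\ff)$ is used — is exactly this last point: showing the weak* limit $\sigma$ lies in $\ff$. We know $\sigma_n \in \ff \subseteq \cone(\ff)$ and $\sigma_n \xrightarrow{\text{w*}} \sigma$, so weak* closedness of $\cone(\ff)$ gives $\sigma \in \cone(\ff)$, i.e. $\sigma = c\,\sigma'$ for some $\sigma' \in \ff$ and $c \ge 0$. It then remains to pin down $c = 1$, which should follow from controlling $\tr\sigma$: from $\xi_n \le (1+\lambda)\sigma_n$ one has, for any finite-rank projection $P$, $\tr[P\xi_n] \le (1+\lambda)\tr[P\sigma_n] \to (1+\lambda)\tr[P\sigma]$, and since $\tr[P\xi_n] \to \tr[P\rho]$ with $\sup_P \tr[P\rho] = 1$, we get $\tr\sigma \ge 1/(1+\lambda) > 0$, so $c > 0$ and $\sigma' = \sigma/c \in \ff$; feeding $\sigma'$ back into the decomposition of $\rho$ (rescaling $\mu$ accordingly, which stays a valid state modulo the trace bookkeeping above) yields a genuine feasible point for $\rob$ with parameter at most $\lambda = \lscr + \varepsilon$. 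Letting $\varepsilon \to 0$ gives $\rob \le \lscr$, completing the proof. I expect the main obstacle to be the trace-continuity bookkeeping: because the trace is only weak* lower semicontinuous, the limiting operators $\sigma$ and $\mu$ may a priori have trace less than $1$, and one needs to check that this deficiency either cannot occur (thanks to the $\le (1+\lambda)\sigma_n$ constraint forcing $\tr\sigma = 1$ in the limit) or can be harmlessly reabsorbed, so that $\sigma$ is a bona fide state in $\ff$ and $\mu$ a bona fide state in $\ddh$.
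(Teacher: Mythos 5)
Your proposal is correct and follows essentially the same route as the paper's proof: both extract a weak* convergent subsequence of the free states via Banach--Alaoglu, use weak* closedness of $\cone(\ff)$ to identify the limit as $c\,\sigma'$ with $\sigma'\in\ff$ and $0< c\leq 1$, verify positivity of the limiting operator inequality by testing against positive compact (rank-one) operators, and conclude from $\rho\leq(1+\lambda)c\,\sigma'\leq(1+\lambda)\sigma'$. The trace "bookkeeping" you flag is handled in the paper exactly as you resolve it, by allowing the limit to have trace $c\leq 1$ and absorbing this into the inequality rather than forcing $c=1$.
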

\begin{proof}
    We have $\lscr \leq \rob$ by definition, so it remains to show that $\lscr \geq \rob$.

Let $B_1 := \{X\in \mathcal{T(H)}: \|X\|_1 \leq 1\}$ be the trace norm unit ball and note that via the Banach-Alaoglu theorem \cite[Theorem~2.6.18]{Megginson}, $B_1$ is weak* compact. Consider the space $$\tilde{\mathcal{F}}=\cone({\mathcal{F}})\cap B_1.$$  Since $\cone(\ff)$ is weak* closed by assumption, $\tilde{\ff}$ is weak* (sequentially
\cite{sequential}) compact.

By definition of $\rob$, $\forall \varepsilon > 0$ there exists some $\sigma_{\varepsilon} \in \ff$ with $\rho \leq (1+\rob +\varepsilon)\sigma_{\varepsilon}$. Now apply this to some sequence of states $\xiseq$ such that $\xi_n \xrightarrow{\textup{tn}} \rho$ and pick $\varepsilon = \frac{1}{n}$, which shows there exists a sequence $\{\sigma_n\}_n$ in $\ff$ such that $(1+\mathcal{R_F}(\xi_n)+\frac{1}{n})\sigma_n-\xi_n \geq 0$ for each $n$. 

We now wish to show that $\liminf_{n\to\infty}\mathcal{R_F}(\xi_n)=:\tilde{R}\geq \rob$.
Take some compact operator $K\in\kkh$, such that $K\geq 0$. We then have $$\tr\left[K\left(\left(1+\mathcal{R_F}(\xi_n)+\tfrac{1}{n}\right)\sigma_n-\xi_n\right)\right]\geq 0.$$ 
There exists a sub-sequence $\xi_{n_l}$ of $\xi_n$ such that $\lim_{n_l \to \infty}\mathcal{R_F}(\xi_{n_l})=\tilde{R}$.
Since ${\rm cone}(\tilde{\ff})$ is weak* compact, there also exists a sub-sequence $\sigma_{n_j}$ of $\sigma_n$ that converges to $\mu\sigma \in \tilde{\ff}$ in the weak* topology, for $\sigma\in \mathcal{F}$ and $0\leq \mu\leq 1$.  Taking the overlap of the $n_l$ and $n_j$ terms to give a new sequence indexed by $n_k$ implies
\begin{equation*}
\begin{aligned}
    \lim_{n_k\to\infty}&\tr\left[K\left(\left(1+\mathcal{R_F}(\xi_{n_k})+\tfrac{1}{n_k}\right)\sigma_{n_k}-\xi_{n_k}\right)\right] \\ =& \tr\left[K\left(\left(1+\tilde{R}\right)\mu\sigma-\rho\right)\right]\geq 0.
\end{aligned}
\end{equation*}
Pick now the case $K=\ket{\psi}\!\bra{\psi}$. In this case, positivity of the above equation implies 
$$ \left(1+\tilde{R}\right)\mu\sigma-\rho\geq 0.$$ 
From this we conclude
\begin{equation*}
\begin{aligned}
\rho \leq \left(1+\tilde{R}\right)\mu\sigma &\leq \left(1+\tilde{R}\right)\sigma \\
\end{aligned}
\end{equation*}
And it follows that for any $\xiseq \xrightarrow{tn}\rho$ we have
\begin{equation*}
1+\rob \leq 1+\tilde{R}.
\end{equation*}
Additionally optimizing over all such $\xiseq$ yields $\rob \leq \lscr$, as required.
\end{proof}

Although the requirement in Proposition \ref{ClosedCone} may seem rather abstract, it has been shown to hold true for the free states in many cases of resource theories, most notably separable states \cite[Lemma~25]{longinfiniteconvex} and incoherent states \cite[Lemma~31]{longinfiniteconvex}. Crucially, it also holds for the set of Gaussian states of continuous variable systems \cite[Lemma~34]{attainability}, allowing for an easier evaluation of advantages in one of the most important resource theories in the non-convex, non-finite setting: non-Gaussianity. This will be our focus in Section~\ref{sec:NG}.

\subsection{Interpretations for generalized robustness}\label{sec:interpret}
In this Section, we establish operational interpretations for $\lscr$ when used in a general setting, without convexity or finite-dimensional restrictions. We find it can be used to provide an upper bound on the advantage in multi-copy channel discrimination, and thus remains a valuable quantifier in channel discrimination tasks. We also investigate cases in which $\lscr$ exactly quantifies the worst case maximal advantage in a convex set decomposition of $\ff$, extending results found in Ref. \cite{finitenonconvex} to the continuous variable setting.

\subsubsection{Upper bound in multi-copy channel discrimination}

We begin by defining a multi-copy variant of the generalized robustness monotone.
\begin{definition} The $m$-copy generalized robustness of a state $\rho$  with respect to a free set $\ff$ is defined as
\begin{equation}
    \robm:=\inf_{\tau\in\mathcal{D(H}^{\otimes m})}\left\{\lambda \!\in\! \mathbb{R}_{\geq 0} \colon \frac{\rho^{\otimes m}+\lambda \tau}{1+\lambda}=\sigma^{\otimes m}, \ \sigma \!\in\! \ff \right\}.
\end{equation}
\end{definition}
This quantifies the resource in a state $\rho$ by how much $m$ copies of $\rho$ can tolerate mixture with another state $\tau$ before turning into $m$ copies of some free state $\sigma$. Note, due to requiring $m$ uncorrelated copies of $\sigma \in \ff$ rather than an arbitrary free state in the composite set $\ff^{\otimes m}$, $\robm$ is generally not the same as $\mathcal{R}_{\ff^{\otimes m}}(\rho^{\otimes m})$.

We further define a lower semi-continuous version of $\robm$.

\begin{definition} The lower semi-continuous $m$-copy generalized robustness with respect to $\ff$ is defined as
\begin{equation} 
\begin{aligned}
&\lscrm:=\liminf_{\xi\rightarrow\rho}\mathcal{R}_{\ff}^{(m)}(\xi)\\
&\!\!=\!\!\inf_{\xiseq \to \rho}\inf_{\tau_n\in\mathcal{D(H}^{\otimes m})}\left\{\lambda \in \mathbb{R}_{\geq 0} \colon \frac{\xi_{n}^{\otimes m}+\lambda \tau_n}{1+\lambda}=\sigma_{n}^{\otimes m}, \ \sigma_n \!\in\! \ff \right\}.
\end{aligned}
\end{equation}
\end{definition}

We can relate this multi-copy version of robustness to the single-copy version by using the link (\ref{robDmax}) between the generalized robustness and the max-relative entropy (\ref{Dmax}) \cite{min-maxRE,datta2009max}, along with the known additivity property of the latter under tensor products.

\begin{proposition}\label{prop1m1m}
    For any integer m, we have \begin{equation}\label{oneversusmany}1+\lscrm=\left(1+\lscr\right)^m.\end{equation}
\end{proposition}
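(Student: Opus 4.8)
The plan is to reduce the multi-copy robustness to the single-copy one via the max-relative entropy, exploiting its additivity under tensor products. First I would record the key identities from~\eqref{robDmax}: for any state $\omega$ and any free set, $\mathcal{R}_{\ff}(\omega) = \inf_{\sigma\in\ff}\{e^{D_{\max}(\omega\|\sigma)}\}-1$, and analogously for the $m$-copy version, with the crucial observation that in $\robm$ the competitor on the free side is constrained to product form $\sigma^{\otimes m}$ with $\sigma\in\ff$. Combining this with the standard additivity $D_{\max}(\rho^{\otimes m}\|\sigma^{\otimes m}) = m\,D_{\max}(\rho\|\sigma)$ gives, for each fixed state $\xi$,
\begin{equation*}
1+\robm[\text{evaluated at }\xi] \;=\; \inf_{\sigma\in\ff} e^{m D_{\max}(\xi\|\sigma)} \;=\; \Big(\inf_{\sigma\in\ff} e^{D_{\max}(\xi\|\sigma)}\Big)^{m} \;=\; \big(1+\mathcal{R}_{\ff}(\xi)\big)^{m},
\end{equation*}
where pulling the $m$-th power out of the infimum is justified because $t\mapsto t^m$ is monotone increasing on $\mathbb{R}_{\geq 0}$. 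This establishes the single-shot (no-liminf) version $1+\mathcal{R}^{(m)}_{\ff}(\xi) = (1+\mathcal{R}_{\ff}(\xi))^m$ at every state $\xi$.

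Next I would lift this pointwise identity to the lower semi-continuous versions by taking $\liminf_{\xi\to\rho}$ of both sides. Since $x\mapsto (1+x)^m$ is continuous and strictly increasing on $[0,\infty)$, it commutes with $\liminf$: for any sequence $\xi_n\xrightarrow{\text{tn}}\rho$, $\liminf_n (1+\mathcal{R}_{\ff}(\xi_n))^m = (\liminf_n(1+\mathcal{R}_{\ff}(\xi_n)))^m$, and infimizing over all such sequences on both sides turns $\liminf_{\xi\to\rho}\mathcal{R}_{\ff}(\xi)$ into $\lscr$ on the right and $\liminf_{\xi\to\rho}\mathcal{R}^{(m)}_{\ff}(\xi)=\lscrm$ on the left. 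One subtlety worth spelling out: the optimal sequence for the $m$-copy liminf can be taken of product form $\xi_n^{\otimes m}$ because $\xi_n\xrightarrow{\text{tn}}\rho$ implies $\xi_n^{\otimes m}\xrightarrow{\text{tn}}\rho^{\otimes m}$ (continuity of the tensor product in trace norm on states), and conversely any minimizing sequence feeding into $\lscrm$ only sees $\xi_n^{\otimes m}$ through $\mathcal{R}^{(m)}_{\ff}$, so the two liminf-infima match.

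The main obstacle I anticipate is not the algebra but making the exchange of infimum and $m$-th power fully rigorous at the level of the $D_{\max}$ expression, including the boundary/degenerate cases: when $\rho\not\leq e^\gamma\sigma$ for any finite $\gamma$ and any $\sigma\in\ff$ (so $D_{\max}=+\infty$ and the robustness is $+\infty$), one must check both sides of~\eqref{oneversusmany} are consistently $+\infty$; and one must confirm that the constraint set defining $\robm$ is genuinely $\{\rho^{\otimes m}\le(1+\lambda)\sigma^{\otimes m}:\sigma\in\ff\}$ rather than allowing $\sigma'\in\ff^{\otimes m}$, which is exactly the remark following the definition of $\robm$. Once the identity $D_{\max}(\rho^{\otimes m}\|\sigma^{\otimes m})=mD_{\max}(\rho\|\sigma)$ is invoked (a standard fact, e.g.\ from~\cite{datta2009max}), everything else is monotone-function bookkeeping, so I would keep that part brief.
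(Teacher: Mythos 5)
Your proposal is correct and follows essentially the same route as the paper's proof: rewrite both robustnesses via the max-relative entropy, invoke its additivity $D_{\max}(\xi^{\otimes m}\|\sigma^{\otimes m})=m\,D_{\max}(\xi\|\sigma)$, and pull the $m$-th power through the infimum over $\sigma$ and the $\liminf$ over $\xi\to\rho$ by monotonicity and continuity of $t\mapsto t^m$. The extra subtleties you flag (the $+\infty$ case and the product-form constraint in the definition of $\robm$) are consistent with the paper's definitions and do not change the argument.
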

\begin{proof}
    \begin{equation}
        \begin{aligned}
           \!\!\!\!  1+\lscrm&=\liminf_{\xi\to\rho}\ \big(1+\mathcal{R}_\mathcal{F}^{(m)}(\xi)\big)\\
             &=\liminf_{\xi\to\rho} \inf_{\sigma\in\ff}\exp \big(D_{\max}(\xi^{\otimes m}\|\sigma^{\otimes m})\big)\\
             &=\liminf_{\xi\to\rho} \inf_{\sigma\in\ff}\exp \big(m D_{\max}(\xi\|\sigma)\big)\\
             &=\liminf_{\xi\to\rho}\ \big(1+\mathcal{R_F}(\xi)\big)^m\\
             &=\big(1+\lscr\big)^m.
        \end{aligned}
    \end{equation}
    \end{proof}
Note that the faithfulness and monotonicity under free operations of $\lscrm$ follow directly from this relation.

Starting by adapting part of the proof in Ref. \cite[Theorem~10]{longinfiniteconvex}, we can now show the multi-copy lower semi-continuous robustness $\lscrm$ upper bounds the maximal advantage given by $\rho$ in multi-copy channel discrimination tasks. The link between $\lscrm$ and $\lscr$ then allows us to show that $\lscr$ can directly be used to upper bound a regularization of such an advantage. 

\begin{theorem}\label{theoremupperbody}
In a general resource theory, the lower semi-continuous generalized robustness of a state $\rho$ upper bounds the regularized advantage enabled by $\rho$ over any free state in $m$-copy channel discrimination, 
    \begin{equation*}
    \begin{gathered}
    \left(\sup_{\{p_i,\Psi_i\}_i}\frac{\sup_{\{M_i\}_i}p_{succ}(\{p_i,\Psi_i\}_i,\{M_i\}_i,\rho^{\otimes m})}{\sup_{\{M_i\}_i}\sup_{\sigma\in\ff}p_{succ}(\{p_i,\Psi_i\}_i,\{M_i\}_i,\sigma^{\otimes m})}\right)^{\frac1m} \\   \leq 1+\lscr.
    \end{gathered}
    \end{equation*}
\end{theorem}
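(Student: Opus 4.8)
The plan is to prove the bound first at the level of a fixed number of copies $m$, showing
\[
\sup_{\{p_i,\Psi_i\}_i}\frac{\sup_{\{M_i\}_i}p_{succ}(\{p_i,\Psi_i\}_i,\{M_i\}_i,\rho^{\otimes m})}{\sup_{\{M_i\}_i}\sup_{\sigma\in\ff}p_{succ}(\{p_i,\Psi_i\}_i,\{M_i\}_i,\sigma^{\otimes m})}\leq 1+\lscrm,
\]
and then invoke Proposition~\ref{prop1m1m} to rewrite the right-hand side as $(1+\lscr)^m$ and take $m$-th roots. So the core task reduces to the single inequality above for the $m$-copy robustness.

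To prove that, I would fix an arbitrary channel ensemble $\{p_i,\Psi_i\}_i$ and an arbitrary POVM $\{M_i\}_i$ acting on the output, and fix a sequence $\xiseq\xrightarrow{\textup{tn}}\rho$ together with states $\tau_n\in\mathcal{D(H}^{\otimes m})$, $\sigma_n\in\ff$ realizing a value $\lambda$ close to $\lscrm$, so that $\xi_n^{\otimes m}\leq(1+\lambda)\sigma_n^{\otimes m}$. Positivity of $\Psi_i$ and of $M_i$ then gives $\tr[\Psi_i(\xi_n^{\otimes m})M_i]\leq(1+\lambda)\tr[\Psi_i(\sigma_n^{\otimes m})M_i]$ for each $i$; summing against $p_i$ yields
\[
p_{succ}(\{p_i,\Psi_i\}_i,\{M_i\}_i,\xi_n^{\otimes m})\leq(1+\lambda)\,p_{succ}(\{p_i,\Psi_i\}_i,\{M_i\}_i,\sigma_n^{\otimes m}).
\]
Since $\sigma_n\in\ff$, the right-hand side is at most $(1+\lambda)$ times $\sup_{\sigma\in\ff}\sup_{\{M_i\}_i}p_{succ}(\{p_i,\Psi_i\}_i,\{M_i\}_i,\sigma^{\otimes m})$; taking the supremum over $\{M_i\}_i$ on the left (the right-hand bound being independent of $\{M_i\}_i$) and then the supremum over ensembles gives the ratio bound with $1+\lambda$ in place of $1+\lscrm$. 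The subtlety is that this argument, as stated, bounds the numerator evaluated at $\xi_n$, not at $\rho$; I would handle this by the lower semi-continuity / liminf structure, using that $\Psi_i$ are trace-norm continuous so $\Psi_i(\xi_n^{\otimes m})\to\Psi_i(\rho^{\otimes m})$, hence $p_{succ}(\cdot,\rho^{\otimes m})=\lim_n p_{succ}(\cdot,\xi_n^{\otimes m})\leq(1+\lambda)\sup_{\sigma\in\ff}\sup_{\{M_i\}_i}p_{succ}(\cdot,\sigma^{\otimes m})$, and finally let $\lambda\downarrow\lscrm$ using the definition of the liminf. Taking $m$-th roots and applying Proposition~\ref{prop1m1m} completes the argument.

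The main obstacle is the interchange of limits and suprema: one must be careful that the ratio involves a supremum over POVMs in both numerator and denominator, and that the optimal measurement for $\xi_n$ need not be optimal for $\rho$. The clean way around this is exactly the order above — first bound $p_{succ}$ pointwise in the POVM by the ($\lambda$-inflated) \emph{free-state-optimal} success probability, which is already POVM-independent, so one can freely take $\sup_{\{M_i\}_i}$ and $\lim_n$ afterwards without any minimax exchange. A secondary point to check is that it suffices to restrict to channels with finite-dimensional (indeed here effectively any) output and to bounded measurements, which is automatic since $p_{succ}$ is defined via traces against a POVM; no compactness of $\ff$ is needed because the robustness itself already packages the approximation by free states. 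I do not expect the $m$-th root and the additivity step to cause difficulty, as they follow directly from Proposition~\ref{prop1m1m} and monotonicity of $x\mapsto x^{1/m}$.
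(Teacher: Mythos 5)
Your proposal is correct and follows essentially the same route as the paper: approximate $\rho$ by a sequence $\xi_n$ realizing the lower semi-continuous $m$-copy robustness, use positivity of the channels and POVM elements together with $\xi_n^{\otimes m}\leq(1+\lambda)\sigma_n^{\otimes m}$ to bound the success probability by $(1+\lambda)$ times the free-state optimum, pass to the limit, and invoke Proposition~\ref{prop1m1m} to convert $1+\lscrm$ into $(1+\lscr)^m$. Your explicit handling of the order of suprema and limits (bounding pointwise in the POVM first, so the free-state-optimal bound is measurement-independent) is exactly the care the paper also takes in its closing remark about individually optimized measurement strategies.
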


\begin{proof}
    Consider a sequence of the form $\{\xi_n^{\otimes m}\}_n = \{(1+\lambda)\sigma_n^{\otimes m} - \lambda \tau_n\}_n$ such that $\xiseq \to \rho$. For a channel ensemble $\{p_i,\Psi_i\}_i$ and POVM $\{M_i\}_i$, the success probability of correctly guessing with $m$ copies of a state is 
    \begin{equation*}
        \begin{aligned}
&p_{succ}(\{p_i,\Psi_i\}_i,\{M_i\}_i,\rho^{\otimes m}) \\
    &= \sum_ip_i\tr[M_i\Psi_i(\rho^{\otimes m})]\\ 
    &=\lim_{n\to\infty}\sum_ip_i\tr[M_i\Psi_i(\xi_n^{\otimes m})] \\
    &\leq \limsup_{n\to\infty}\sum_ip_i\tr[M_i\Psi_i((1+\lambda)\sigma_n^{\otimes m})]\\
    &\leq (1+\lambda)\sup_{\tilde{\sigma}\in\ff}\sum_ip_i\tr[M_i\Psi_i(\tilde{\sigma}^{\otimes m})]\\
    &=(1+\lambda)\sup_{\tilde{\sigma}\in\ff}p_{succ}(\{p_i,\Psi_i\}_i,\{M_i\}_i,\tilde{\sigma}^{\otimes m}).
        \end{aligned}
    \end{equation*}
    Since this holds for any $\lambda$ such that $\frac{\xi_{n}^{\otimes m}+\lambda \tau_n}{1+\lambda}=\sigma_{n}^{\otimes m}$ for some $\sigma_n \in \ff$, we have
    \begin{equation*}
        \begin{aligned}
    &p_{succ}(\{p_i,\Psi_i\}_i,\{M_i\}_i,\rho^{\otimes m})\\ &\leq \inf\left\{(1+\lambda)\sup_{\sigma\in\ff}p_{succ}(\{p_i,\Psi_i\}_i,\{M_i\}_i,\sigma^{\otimes m}) \colon \right.\\ &\qquad \quad \left.\frac{\xi_{n}^{\otimes m}+\lambda \tau_n}{1+\lambda}=\sigma_{n}^{\otimes m}, \ \sigma_n \in \ff\right\}\\
    &=(1+\lscrm)\sup_{\sigma\in\ff}p_{succ}(\{p_i,\Psi_i\}_i,\{M_i\}_i,\sigma^{\otimes m})   .        
        \end{aligned}
    \end{equation*}
    It follows that
$   \dfrac{p_{succ}(\{p_i,\Psi_i\}_i,\{M_i\}_i,\rho^{\otimes m})}{\sup_{\sigma\in\ff}p_{succ}(\{p_i,\Psi_i\}_i,\{M_i\}_i,\sigma^{\otimes m})}\leq (1+\lscrm) = (1+\lscr)^m
$
and, since this holds for an arbitrary $\povm$ and channel ensemble $\{p_i,\Psi_i\}_i$, we have 
\begin{equation*}
\begin{aligned}
    &\sup_{\{M_i\}_i,\{p_i,\Psi_i\}_i}\frac{p_{succ}(\{p_i,\Psi_i\}_i,\{M_i\}_i,\rho^{\otimes m})}{\sup_{\sigma\in\ff}p_{succ}(\{p_i,\Psi_i\}_i,\{M_i\}_i,\sigma^{\otimes m})} \\ &\leq(1+\lscr)^m.
    \end{aligned}
    \end{equation*}
The above considers the case in which the same measurement strategy is used throughout for $\rho$ and free states. We can also then upper bound the more realistic scenario in which the measurement strategy can be optimized individually for $\rho$ and $\sigma\in\ff$ since this will only reduce the overall ratio of success probabilities:
\begin{equation}
\begin{aligned}
&    \sup_{\{p_i,\Psi_i\}_i}\frac{\sup_{\{M_i\}_i}p_{succ}(\{p_i,\Psi_i\}_i,\{M_i\}_i,\rho^{\otimes m})}{\sup_{\{M_i\}_i}\sup_{\sigma\in\ff}p_{succ}(\{p_i,\Psi_i\}_i,\{M_i\}_i,\sigma^{\otimes m})}\\ &\leq (1+\lscr)^m.
\end{aligned}
    \end{equation}   
    \end{proof}

In convex resource theories, the above bound is tight already for a single copy, $m=1$ \cite{robustnessofcoherence,finiteconvexrobustness,infiniteconvex}. For non-convex resource theories (in which case strong duality is not guaranteed), the above bound may not be tight, and as argued in Section~\ref{sec:TwoCopies} we require at least $m=2$ to reveal a genuine advantage of resource states. Our Theorem~\ref{theoremupperbody} shows that, even in the case of non-convex resource theories in any dimension, $\lscr$ still retains a valuable interpretation in  characterizing the advantage in channel discrimination tasks, by providing an upper bound on the maximum advantage that can be achieved. To the best of our knowledge, this interesting result was not explored or presented in previous literature on non-convex resource theories \cite{finitenonconvex,longfinitenonconvex,robustnesscontinuity,starresourcetheories}

\subsubsection{Worst case advantage}
Here we consider an alternative operational interpretation for the generalized robustness, which originates from partitioning the non-convex free set $\ff$ into a collection of convex subsets, and analyzing advantages in each of the corresponding convex subtheories. Let $\ff=\bigcup_k \ff_k$ be a decomposition of $\ff$ into a union of convex sets $\ff_k$. In the finite-dimensional case, $\rob$ has been shown to quantify the worst case maximal advantage given by $\rho$ when compared to $\sigma\in\ff_k$ in single-copy channel discrimination tasks \cite{finitenonconvex,longfinitenonconvex}. The aim of this Section is to investigate conditions on $\ff$ for which this result also holds for $\lscr$ in the infinite-dimensional case. Whilst it is not necessarily true in full generality, we show it does hold true in most cases of interest.

We begin by looking at conditions under which the non-convex robustness and the worst case convex robustness are equal.

\begin{proposition}\label{lscr=inflscr} Let $\ff=\bigcup_k \ff_k$, where the each $k$ belongs to an arbitrary (not necessarily countable) set of indices, and each $\ff_k$ is closed and convex. Let $\rho\in\ddh\backslash\ff$. Then
     \begin{equation} 
    \lscr = \inf_k\underline{\mathcal{R}_{\mathcal{F}_k}}(\rho)
    \end{equation} 
    provided one of the following conditions is true:
    \begin{enumerate}
        \item $\lscr = \rob$
        \item The decomposition $\ff = \bigcup_k\ff_k$ consists of a finite number of convex sets
    \end{enumerate}
\end{proposition}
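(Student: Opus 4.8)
The overall strategy is to establish the two inequalities $\lscr \leq \inf_k\underline{\mathcal{R}_{\ff_k}}(\rho)$ and $\lscr \geq \inf_k\underline{\mathcal{R}_{\ff_k}}(\rho)$ separately. The first inequality is the easy direction and requires no extra hypotheses: since each $\ff_k \subseteq \ff$, any sequence $\{\xi_n\}_n \to \rho$ together with decompositions $\xi_n + \lambda\tau_n = (1+\lambda)\sigma_n$ with $\sigma_n \in \ff_k$ is also an admissible competitor in the optimization defining $\lscr$; hence $\lscr \leq \underline{\mathcal{R}_{\ff_k}}(\rho)$ for every $k$, and taking the infimum over $k$ gives the claim. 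The content of the proposition is therefore the reverse inequality, which I expect to be the main obstacle and which is where the two alternative hypotheses come into play.

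For the reverse inequality under condition (1), $\lscr = \rob$: here I would use the characterization of $\rob$ via (\ref{robDmax}), namely $\rob = \inf_{\sigma\in\ff}\{\lambda : \rho \leq (1+\lambda)\sigma\}$. Since $\ff = \bigcup_k \ff_k$, the infimum over $\sigma \in \ff$ splits as $\inf_k \inf_{\sigma\in\ff_k}$, so $\rob = \inf_k \mathcal{R}_{\ff_k}(\rho)$. It then remains to relate $\mathcal{R}_{\ff_k}(\rho)$ to $\underline{\mathcal{R}_{\ff_k}}(\rho)$: since each $\ff_k$ is closed and convex, and presumably one is implicitly invoking that for such sets the Dmax-robustness is already lower semi-continuous (or at least $\mathcal{R}_{\ff_k}(\rho) \geq \underline{\mathcal{R}_{\ff_k}}(\rho)$ always holds by definition of the liminf, which suffices), we get $\rob = \inf_k \mathcal{R}_{\ff_k}(\rho) \geq \inf_k \underline{\mathcal{R}_{\ff_k}}(\rho) \geq \lscr$, closing the loop with the easy direction. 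The delicate point to check is whether one needs $\underline{\mathcal{R}_{\ff_k}}(\rho) = \mathcal{R}_{\ff_k}(\rho)$ or only one inequality; since $\underline{\mathcal{R}_{\ff_k}}(\rho) \leq \mathcal{R}_{\ff_k}(\rho)$ is immediate from the liminf definition, only that direction is needed, so no extra assumption on the $\ff_k$ beyond closedness and convexity should be required.

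For the reverse inequality under condition (2), a finite decomposition $\ff = \ff_1 \cup \dots \cup \ff_N$: the plan is a pigeonhole argument on the optimizing sequence. Take a sequence $\{\xi_n\}_n \to \rho$ with decompositions witnessing values $\lambda_n \to \lscr$, where $\xi_n + \lambda_n\tau_n = (1+\lambda_n)\sigma_n$ and $\sigma_n \in \ff$. Each $\sigma_n$ lies in at least one $\ff_{k(n)}$; since there are finitely many indices, some fixed index $k^\ast$ occurs infinitely often, so along that subsequence we have a valid sequence witnessing $\underline{\mathcal{R}_{\ff_{k^\ast}}}(\rho) \leq \liminf \lambda_n = \lscr$ (here using that $\xi_n \to \rho$ still holds along the subsequence and the $\lambda_n$ still converge to $\lscr$, possibly after passing to a further subsequence attaining the liminf). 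Hence $\inf_k \underline{\mathcal{R}_{\ff_k}}(\rho) \leq \underline{\mathcal{R}_{\ff_{k^\ast}}}(\rho) \leq \lscr$, which combined with the easy direction gives equality. The main obstacle here is purely bookkeeping: making sure the subsequence simultaneously lands in a single $\ff_{k^\ast}$ and realizes the liminf of the $\lambda_n$, which is handled by a standard nested-subsequence extraction, and this is precisely why finiteness of the decomposition is essential — with infinitely many pieces no single index need recur.
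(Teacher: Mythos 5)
Your proposal is correct and follows essentially the same route as the paper: the easy inequality from $\mathcal{F}_k\subseteq\ff$, Case~1 via the identity $\rob=\inf_k\mathcal{R}_{\mathcal{F}_k}(\rho)$ combined with the trivial bound $\underline{\mathcal{R}_{\mathcal{F}_k}}(\rho)\leq\mathcal{R}_{\mathcal{F}_k}(\rho)$, and Case~2 via pigeonholing the near-optimal free states into one of the finitely many convex pieces. The only cosmetic difference is that in Case~1 you split the infimum over the union directly, whereas the paper derives the same identity by a short contradiction argument; the mathematical content is identical.
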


\begin{proof}
    Firstly, note that the following inequality always holds:
    \begin{equation} 
    \lscr \leq \inf_k\underline{\mathcal{R}_{\mathcal{F}_k}}(\rho).
    \end{equation}
    This is because, since $\ff_k \subseteq \ff$, we have $\lscr \leq \underline{\mathcal{R}_{\mathcal{F}_k}}(\rho)$ for all $k$. Taking infimums over $k$ on both sides yields $\lscr \leq \inf_k\underline{\mathcal{R}_{\mathcal{F}_k}}(\rho)$ as required.

    For the reverse inequality, we prove each case individually.

    {\it Case 1.}~
Let $\lscr=\rob$ and assume, towards a contradiction, that $\rob<\inf_k\mathcal{R}_{\ff_k}(\rho)$. This implies $\exists \ \delta>0$ such that $\rob+\delta<\inf_k\mathcal{R}_{\ff_k}(\rho)$.
 From the definition of $\rob$, we have that $\forall \varepsilon>0$ there exists some $\sigma_{\varepsilon}$ such that $\rho\leq (1+\rob+\varepsilon)\sigma_{\varepsilon}$, where each $\sigma_{\varepsilon}\in\ff_{k_\varepsilon}$ for some ($\varepsilon$-dependent) $k$. Fix some $\varepsilon<\delta/2$, then for this value of $\varepsilon$, we have 
 \begin{equation*}
     \begin{aligned} \rho&\leq\left(1+\rob+\varepsilon\right)\sigma_{\varepsilon}\\&<\left(1+\rob+\delta/2\right)\sigma_{\varepsilon}\\&<\left(1+\inf_k\mathcal{R}_{\ff_k}(\rho)-\delta/2\right)\sigma_{\varepsilon}
     \end{aligned}
\end{equation*}
 where $\sigma_{\varepsilon}$ is in some fixed $\ff_k$. This implies $\mathcal{R}_{\ff_k}(\rho)\leq \inf_k\mathcal{R}_{\ff_k}(\rho)-\delta/2<\inf_k\mathcal{R}_{\ff_k}(\rho)$, which is a contradiction. We therefore must have $\rob\geq\inf_k\mathcal{R}_{\ff_k}(\rho)$. Since by definition $\underline{\mathcal{R}_{\ff_k}}(\rho)\leq \mathcal{R}_{\ff_k}(\rho)$ for fixed $k$ (and hence $\inf_k\underline{\mathcal{R}_{\ff_k}}(\rho)\leq \inf_k\mathcal{R}_{\ff_k}(\rho)$), we also have $\rob\geq\inf_k\underline{\mathcal{R}_{\ff_k}}(\rho)$, as required.

        {\it Case 2.}~We have $\forall \varepsilon >0$, there exist sequences $\xiseq \xrightarrow{\textup{tn}}\rho$ and $\{\sigma(\varepsilon)_n\}_n \in \ff$ such that $\xi_n \leq (1+ \lscr + \varepsilon)\sigma_n$. For $\ff = \bigcup_k\ff_k$, since the union of convex sets is finite, there will always exist (at least) one of these convex sets, say $\ff_{\tilde{k}}$, that contains a subsequence $\{\sigma_{n_j}\}_{n_j}$ of $\{\sigma_n\}_{n}$ for fixed $\varepsilon$.
    This means that we can find sequences such that $\xi_{n_j} \leq (1+ \underline{\mathcal{R}_{\mathcal{F}_{\tilde{k}}}}(\rho) + \varepsilon)\sigma_{n_j}$ for any $\varepsilon>0$, and therefore $\lscr$ acts as a suboptimal value of $\underline{\mathcal{R}_{\mathcal{F}_{\tilde{k}}}}(\rho)$. This implies $\lscr \geq \underline{\mathcal{R}_{\mathcal{F}_{\tilde{k}}}}(\rho)$ and hence $\lscr \geq \inf_k\underline{\mathcal{R}_{\mathcal{F}_k}}(\rho)$.
\end{proof}
The first condition covers the cases where $\ff$ is compact in the trace norm topology. This includes any energy-constrained resource theory, any resource theory with a finite number of free states, and recovers the result in finite dimensions. It also covers the case where $\cone(\ff)$ is weak* closed even when non-convex, and thus includes the resource theory of non-Gaussianity.
The second condition relates to practically relevant cases where the free states are given by multiple constraints, such as being incoherent in multiple bases \cite{finitenonconvex}, or being a collection of thermal equilibrium states at different temperatures \cite{resourceengines}. The associated resource theories arise naturally in applications such as quantum metrology and thermal engineering.

Our next result establishes that, under the conditions of Proposition \ref{lscr=inflscr}, $\lscr$ quantifies the worst case maximal advantage given by a resource state in a single-copy channel discrimination task, with respect to the decomposition into convex sets.

\begin{theorem}\label{theoremin}
    When $\lscr = \inf_k\underline{\mathcal{R}_{\mathcal{F}_k}}(\rho)$, we have 
    \begin{equation*}
    \begin{aligned}\inf_k & \!\! \sup_{\{M_i\}_i,\{p_i,\Psi_i\}_i} \frac{p_{succ}(\{p_i,\Psi_i\}_i,\{M_i\}_i,\rho)}{\sup_{\sigma\in\ff_k}p_{succ}(\{p_i,\Psi_i\}_i,\{M_i\}_i,\sigma)}\\
    &= 1 + \lscr.
    \end{aligned}
    \end{equation*}
\end{theorem}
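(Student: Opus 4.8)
The plan is to reduce the statement to the convex case applied separately to each $\ff_k$, and then take an infimum. Write $r_k:=\underline{\mathcal{R}_{\mathcal{F}_k}}(\rho)$ and let $A_k$ denote the inner supremum on the left-hand side, i.e.\ the largest single-copy channel discrimination advantage of $\rho$ over the convex set $\ff_k$ (with a common measurement used on $\rho$ and on the free states). Since the hypothesis supplies $\lscr=\inf_k r_k$, it is enough to prove that $A_k=1+r_k$ for every $k$; taking infima over $k$ and using $\inf_k(1+r_k)=1+\inf_k r_k$ then yields the stated identity.

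The upper bound $A_k\le 1+r_k$ follows from Theorem~\ref{theoremupperbody} applied with $m=1$ and with $\ff_k$ in place of $\ff$: that proof establishes the common-measurement inequality $p_{succ}(\{p_i,\Psi_i\}_i,\{M_i\}_i,\rho)\le\big(1+\underline{\mathcal{R}^{(1)}_{\mathcal{F}_k}}(\rho)\big)\,\sup_{\sigma\in\ff_k}p_{succ}(\{p_i,\Psi_i\}_i,\{M_i\}_i,\sigma)$ for every channel ensemble and POVM, and since $\underline{\mathcal{R}^{(1)}_{\mathcal{F}_k}}(\rho)=r_k$, taking the supremum over $\{M_i\}_i$ and $\{p_i,\Psi_i\}_i$ gives $A_k\le 1+r_k$. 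No convexity is needed here, as Theorem~\ref{theoremupperbody} holds for an arbitrary free set.

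The lower bound $A_k\ge 1+r_k$ is where closedness and convexity of $\ff_k$ enter, and it is exactly the operational characterization of the lower semi-continuous generalized robustness for a \emph{convex} free set in single-copy channel discrimination, proved in finite \cite{finiteconvexrobustness} and infinite \cite{infiniteconvex,longinfiniteconvex} dimensions. For convex closed $\ff_k$ one shows $A_k$ is attained, up to any $\varepsilon>0$, by an explicit channel discrimination task built from a dual witness of the robustness semidefinite program --- an operator $W\ge 0$ with $\tr[W\sigma]\le 1$ for all $\sigma\in\ff_k$ and $\tr[W\rho]\ge 1+r_k-\varepsilon$. Concretely, I would pass via $r_k=\liminf_{\xi\to\rho}\mathcal{R}_{\ff_k}(\xi)$ to a state $\xi$ with $\|\xi-\rho\|_1$ small and $\mathcal{R}_{\ff_k}(\xi)$ within $\varepsilon$ of $r_k$, run the task associated with $\xi$ (its success-probability ratio against $\ff_k$ being within $\varepsilon$ of $1+\mathcal{R}_{\ff_k}(\xi)$, with the denominator bounded below), and transfer the bound from $\xi$ to $\rho$ using trace-norm contractivity of the channels, which perturbs $p_{succ}(\{p_i,\Psi_i\}_i,\{M_i\}_i,\rho)$ by at most $O(\|\xi-\rho\|_1)$ and leaves the free-state denominator unchanged. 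Letting $\varepsilon\to 0$ gives $A_k\ge 1+r_k$, hence $A_k=1+r_k$ and the theorem.

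The crux is this achievability step: unlike the upper bound it is not automatic, since producing a dual witness attaining $1+r_k$ in an infinite-dimensional convex theory relies on strong duality for the robustness semidefinite program, i.e.\ on weak*-closedness of $\cone(\ff_k)$ (or compactness of $\ff_k$, or finiteness of the decomposition). These are precisely the structural conditions already underlying Proposition~\ref{lscr=inflscr}, so the argument carries through in all cases of interest --- energy-constrained theories, the resource theory of non-Gaussianity (where $\cone(\ff)$ is weak* closed), and finite-constraint theories such as multi-basis coherence or multi-temperature thermality --- but one must verify this regularity for each individual $\ff_k$, not just for $\ff$.
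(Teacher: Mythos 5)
Your proof follows essentially the same route as the paper: both reduce the claim to the known single-copy operational characterization of the lower semi-continuous robustness for each closed convex piece $\ff_k$ (the paper simply invokes Theorem~1 of Ref.~\cite{infiniteconvex} to get $A_k = 1+\underline{\mathcal{R}_{\mathcal{F}_k}}(\rho)$) and then take the infimum over $k$, using the hypothesis $\lscr=\inf_k\underline{\mathcal{R}_{\mathcal{F}_k}}(\rho)$ exactly as you do. One small correction: your closing caveat that weak*-closedness of $\cone(\ff_k)$ (or compactness of $\ff_k$) must be verified separately for each $k$ is unnecessary --- the convex-case result is stated for the lower semi-continuous robustness $\underline{\mathcal{R}_{\mathcal{F}_k}}$ and holds for an arbitrary closed convex free set, since the lower semi-continuous regularization is precisely what closes the duality gap; the extra regularity conditions of Proposition~\ref{lscr=inflscr} are needed only to establish the identity $\lscr=\inf_k\underline{\mathcal{R}_{\mathcal{F}_k}}(\rho)$, which the theorem already assumes.
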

\begin{proof}
    Since each $\ff_k$ is convex, we can use the result from \cite[Theorem~1]{infiniteconvex}, and hence 
    \begin{equation} 
    \sup_{\{M_i\}_i,\{p_i,\Psi_i\}_i} \frac{p_{succ}(\{p_i,\Psi_i\}_i,\{M_i\}_i,\rho)}{\sup_{\sigma\in\ff_k}p_{succ}(\{p_i,\Psi_i\}_i,\{M_i\}_i,\sigma)} = 1 + \underline{\mathcal{R}_{\mathcal{F}_k}}(\rho).
    \end{equation}
    Taking the infimum on each side then gives 
    \begin{equation} 
    \begin{aligned}
    \inf_k & \sup_{\{M_i\}_i,\{p_i,\Psi_i\}_i} \frac{p_{succ}(\{p_i,\Psi_i\}_i,\{M_i\}_i,\rho)}{\sup_{\sigma\in\ff_k}p_{succ}(\{p_i,\Psi_i\}_i,\{M_i\}_i,\sigma)} \\
    &=1 + \inf_k\underline{\mathcal{R}_{\mathcal{F}_k}}(\rho)\\
    &= 1 + \lscr.
    \end{aligned}
    \end{equation}
\end{proof}
This generalizes the result from \cite[Theorem~4]{finitenonconvex} and shows that, in most cases of practical interest, $\lscr$ remains an exact quantifier for the advantage given in a worst case variant of channel discrimination tasks, even in the continuous variable regime and without convexity restrictions.

\subsection{Observable lower bounds to robustness}\label{sec:lowerbounds}

Here we show that one can provide experimentally accessible lower bounds to the generalized robustness from $m$-copy witness operators. 
We will exploit again the properties of the max-relative entropy (\ref{Dmax}). By further manipulating the expressions in  the proof of Proposition~\ref{prop1m1m}, we can write
\begin{equation}
    \begin{aligned}
        &(1+\lscr)^m  \\
        &=\liminf_{\xi\to\rho}\inf_{\sigma \in \ff}\exp(D_{\max}(\xi^{\otimes m}\|\sigma^{\otimes m}))\\
        &=\liminf_{\xi\to\rho}\inf_{\sigma \in \ff} \sup_{
       \substack{X\in \mathcal{B(H}^{\otimes m})\ :\  0\leq X \leq I, \\ \tr[\sigma^{\otimes m}X]>0\ \forall \sigma \in \ff}
       }\! \left\{\frac{\tr[\xi^{\otimes m}X]}{\tr[\sigma^{\otimes m}X]} \right\}\\
       &\geq \liminf_{\xi\to\rho} \sup_{
       \substack{X\in \mathcal{B(H}^{\otimes m})\ :\  0\leq X \leq I, \\ \tr[\sigma^{\otimes m}X]>0\ \forall \sigma \in \ff}
       }\! \left\{\frac{\tr[\xi^{\otimes m}X]}{\sup_{\sigma\in\ff}\tr[\sigma^{\otimes m}X]} \right\}\\
       &\geq \sup_{
       \substack{X\in \mathcal{B(H}^{\otimes m})\ :\  0\leq X \leq I, \\ \tr[\sigma^{\otimes m}X]>0\ \forall \sigma \in \ff}
       }\! \liminf_{\xi\to\rho}\left\{\frac{\tr[\xi^{\otimes m}X]}{\sup_{\sigma\in\ff}\tr[\sigma^{\otimes m}X]} \right\}\\
       &=\sup_{
       \substack{X\in \mathcal{B(H}^{\otimes m})\ :\  0\leq X \leq I, \\ \tr[\sigma^{\otimes m}X]>0\ \forall \sigma \in \ff}
       }\! \left\{\frac{\tr[\rho^{\otimes m}X]}{\sup_{\sigma\in\ff}\tr[\sigma^{\otimes m}X]}\right\}\\
       &\geq \tr[\rho^{\otimes m}X]\,,\quad   \begin{array}{c}{0 \leq X\in \mathcal{B(H}^{\otimes m})}, \\
       0< \tr[\sigma^{\otimes m}X] \leq 1\,\, \forall \sigma \in \ff, \end{array}
    \end{aligned}\label{obwound}
    \end{equation}
where in the second equality we use \cite[Corollary~3.46]{Mosonyi2023} and in the subsequent lines we use the max-min inequality. 
Note that an operator $X$ satisfying the conditions in the last line of (\ref{obwound}) can be defined as $X:=\mathds{1}^{\otimes m} - \frac{W}{\|W\|_{\infty}}$, where $W$ is a $m$-copy resource witness fulfilling Eq.~(\ref{witnessdef}) with the additional constraint $\tr[\sigma^{\otimes m} W] \neq \|W\|_{\infty}$ $\forall \sigma \in \ff$. 
We have then
\begin{equation}
\label{obiwan}
\lscr \geq \left(\max \left\{0,\, - \frac{\tr[\rho^{\otimes m} W] }{\|W\|_{\infty}}\right\}\right)^\frac1m.
\end{equation}


\section{Generalized robustness of non-Gaussianity and its applications}\label{sec:NG}
In this Section, we specialize to the resource theory of non-Gaussianity. We take the set of free states to be the set of all and only Gaussian states, $\mathcal{G}$, i.e.~all states with a Gaussian Wigner distribution \cite{CVQI_and_beyond,alessiobook} (see Sec.~\ref{sec:CV}). This forms a closed \cite{gaussianQRTs} yet non-convex subset in the infinite-dimensional space $\ddh$, and therefore doesn't fit into previously studied resource-theoretic frameworks. In particular, our approach differs from the study of so-called `quantum' or `genuine' non-Gaussianity, in which one takes as free set the {\it convex hull} of the set of Gaussian states, $\overline{\mathcal{G}}:= \text{cl conv}(\mathcal{G})$ \cite{convexG1,convexG2,infiniteconvex,longinfiniteconvex,Tommoguitar}. Our goal is to identify advantageous features of any non-Gaussian state, including those which can be obtained as convex mixtures of Gaussian states and hence have a positive, yet non-Gaussian Wigner function.

Since the cone generated by Gaussian states is weak* closed \cite{attainability}, Proposition \ref{ClosedCone} implies that Definitions \ref{robDef} and \ref{lscrDef} coincide, and hence we define the lower semi-continuous generalized robustness of non-Gaussianity of a state $\rho$ as
\begin{equation}
    \underline{\mathcal{R_G}}(\rho)=\mathcal{R_G}(\rho):=\inf_{\tau\in\ddh}\left\{\lambda\in \mathbb{R}_{\geq 0} \colon \frac{\rho + \lambda \tau}{1+\lambda} = \sigma \in \mathcal{G} \right\}.
\end{equation}
We will refer to $\underline{\mathcal{R_G}}=\mathcal{R_G}$ simply as the {\em (generalized) robustness of non-Gaussianity}, and omit the underline from now on.
By virtue of Proposition~\ref{propoprops}, the robustness $\mathcal{R_G}$ is a faithful measure of non-Gaussianity and is monotonically nonincreasing under Gaussian channels, i.e., free operations mapping Gaussian states into Gaussian states.

Recalling Proposition~\ref{lscr=inflscr}, and observing that a convex set decomposition of $\mathcal{G}$ consists of an infinite number of sets, each containing a single state, we note that an intuitive operational interpretation of ${\mathcal{R_G}}(\rho)$ arising from Theorem~\ref{theoremin} is the following: given the choice between using a non-Gaussian state $\rho$ and an arbitrary (but fixed) Gaussian state $\sigma$, there will always exist a single-copy channel discrimination task for which $\rho$ performs strictly better than $\sigma$ by a factor of at least $1+{\mathcal{R_G}}(\rho)$. In formula,
       \begin{equation}\inf_{\sigma \in \mathcal{G}}  \sup_{\{M_i\}_i,\{p_i,\Psi_i\}_i} \frac{p_{succ}(\{p_i,\Psi_i\}_i,\{M_i\}_i,\rho)}{p_{succ}(\{p_i,\Psi_i\}_i,\{M_i\}_i,\sigma)} = 1 + \mathcal{R_G}(\rho).\end{equation}

Alternatively, from Theorem~\ref{theoremupperbody} the robustness of non-Gaussianity $\mathcal{R_G}(\rho)$ also upper bounds the regularized advantage of using a non-Gaussian state $\rho$ versus any Gaussian state $\sigma \in \mathcal{G}$ in multi-copy channel discrimination tasks.

In what follows, we will evaluate and analyze the robustness of non-Gaussianity for some insightful examples, illustrated in Fig.~\ref{fig:wigners}.

\begin{figure}[t]
    \centering
    \subfloat[]{\includegraphics[width=0.5\columnwidth]{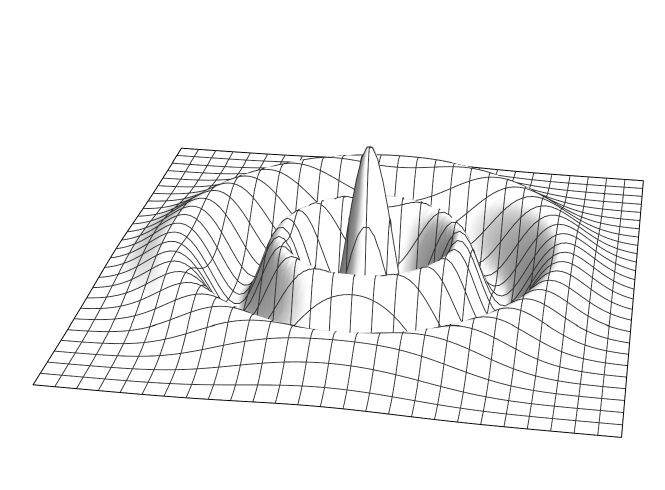}\label{fig:wignerfock}}\hfill \subfloat[]{\includegraphics[width=0.5\columnwidth]{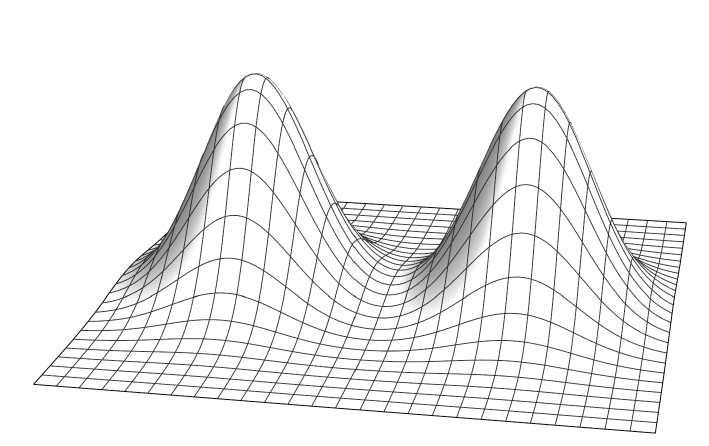}\label{fig:wignermix}}
    \caption{Illustration of the Wigner distributions for two classes of non-Gaussian states: (a)~Fock state $\ket{n}\!\bra{n}$ with $n=4$; (b)~balanced mixture $\rho_{0,d}$ of two coherent states with amplitudes $\pm d/\!\sqrt{2}$ for $d=5/2$.}
    \label{fig:wigners}
\end{figure}

\subsection{Fock states}
As a first case study, here we directly calculate $\mathcal{R_G}$ for Fock states $\ket{n}\!\bra{n}$; see Fig.~\ref{fig:wignerfock}.

We begin by recalling known results for the generalized robustness $\rob$ of an arbitrary resource. For pure states $\ket{\psi}\!\bra{\psi}$ we have \cite[Lemma~13]{longinfiniteconvex}
\begin{equation}\label{upper}
    1+\mathcal{R_F}(\ket{\psi}\!\bra{\psi})= \inf_{\sigma \in \ff} \bra{\psi}\sigma^{-1}\ket{\psi}
\end{equation}
and for arbitrary states we can use that the (Umegaki) relative entropy $D(\rho\|\sigma):=\tr[\rho (\log \rho - \log \sigma)]$ always lower bounds the max-relative entropy (\ref{Dmax}) \cite{min-maxRE}:
\begin{equation}\label{lower}
    \inf_{\sigma\in\ff}D(\rho\| \sigma) \leq \inf_{\sigma\in\ff}D_{\max}(\rho\| \sigma) =\log(\rob+1).
\end{equation}

We can then use the above bounds to calculate the robustness of non-Gaussianity of Fock states.

\begin{proposition}\label{propofocker}
    For an $n$-photon Fock state $\ket{n}\!\bra{n}$, the generalized robustness of non-Gaussianity is given by
    \begin{equation}\label{fockn}
        {\mathcal{R_G}}(\ket{n}\!\bra{n}) = \frac{(n+1)^{n+1}}{n^n}-1.
    \end{equation}
\end{proposition}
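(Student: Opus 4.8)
The plan is to pin down $\mathcal{R_G}(\ket{n}\!\bra{n})$ by matching an upper bound obtained from the pure-state formula (\ref{upper}) with a lower bound obtained from the relative-entropy bound (\ref{lower}), both realized by the thermal state $\sigma_n$ of mean photon number $n$ --- equivalently, the zero-mean Gaussian state with covariance matrix $(2n+1)\mathds{1}$.

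First I would establish the upper bound. Inserting $\sigma=\sigma_n$ into (\ref{upper}) and using that $\ket{n}$ is an eigenvector of any thermal state $\sigma_{\bar n}$ with eigenvalue $\bar n^{\,n}/(1+\bar n)^{n+1}$, one finds $\bra{n}\sigma_{\bar n}^{-1}\ket{n}=(1+\bar n)^{n+1}/\bar n^{\,n}$; a one-line differentiation shows this is minimized over $\bar n>0$ at $\bar n=n$, so $1+\mathcal{R_G}(\ket{n}\!\bra{n})\le\inf_{\bar n}\bra{n}\sigma_{\bar n}^{-1}\ket{n}=(n+1)^{n+1}/n^{\,n}$.

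For the lower bound, (\ref{lower}) gives $\log\!\big(1+\mathcal{R_G}(\ket{n}\!\bra{n})\big)\ge\inf_{\sigma\in\mathcal{G}}D(\ket{n}\!\bra{n}\|\sigma)$, and I would then invoke the standard fact about the relative entropy of non-Gaussianity: for any state $\rho$ of finite entropy, $\inf_{\sigma\in\mathcal{G}}D(\rho\|\sigma)=S(\rho_G)-S(\rho)$, where $\rho_G$ is the Gaussian state with the first and second moments of $\rho$ (this holds since $\log\sigma$ is quadratic in the quadratures for Gaussian $\sigma$, so $\tr[\rho\log\sigma]=\tr[\rho_G\log\sigma]$, whence the Pythagorean identity $D(\rho\|\sigma)=D(\rho\|\rho_G)+D(\rho_G\|\sigma)\ge D(\rho\|\rho_G)=S(\rho_G)-S(\rho)$). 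For $\rho=\ket{n}\!\bra{n}$ we have $S(\ket{n}\!\bra{n})=0$, vanishing first moments, and covariance matrix $(2n+1)\mathds{1}$, so $\rho_G=\sigma_n$ and $S(\sigma_n)=(n+1)\log(n+1)-n\log n=\log\!\big[(n+1)^{n+1}/n^{\,n}\big]$. Hence $\log\!\big(1+\mathcal{R_G}(\ket{n}\!\bra{n})\big)\ge\log\!\big[(n+1)^{n+1}/n^{\,n}\big]$, and the two bounds coincide, proving (\ref{fockn}) (and, with the convention $0^0=1$, reproducing $\mathcal{R_G}=0$ at $n=0$).

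The routine parts are the optimization over $\bar n$ and the evaluation of $S(\sigma_n)$; the delicate step is the lower bound's use of $\inf_{\sigma\in\mathcal{G}}D(\rho\|\sigma)=S(\rho_G)-S(\rho)$, where one must check the support inclusion $\mathrm{supp}(\ket{n}\!\bra{n})\subseteq\mathrm{supp}(\rho_G)$ (immediate, as $\sigma_n$ has full rank), that $\rho_G$ is a legitimate Gaussian state (its covariance matrix inherits the uncertainty relation), and that the entropies and relative entropies involved are finite, so the Pythagorean identity genuinely holds. A more self-contained alternative for the lower bound --- which I would fall back on if that identity is to be avoided --- is to apply operator Jensen, $D(\ket{n}\!\bra{n}\|\sigma)\ge-\log\bra{n}\sigma\ket{n}$, and to prove directly that $\sup_{\sigma\in\mathcal{G}}\bra{n}\sigma\ket{n}=n^{\,n}/(n+1)^{n+1}$ by Williamson-decomposing a general Gaussian state into a Gaussian-unitary image of a thermal state and showing that displacement and squeezing can only decrease the overlap with the origin-centered, phase-symmetric Fock state; this route is valid but requires a phase-space (Laguerre-polynomial) estimate, so I would favour the relative-entropy argument.
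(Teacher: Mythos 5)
Your proposal is correct and follows essentially the same route as the paper: the upper bound via the pure-state formula $1+\mathcal{R_G}=\inf_\sigma\bra{n}\sigma^{-1}\ket{n}$ evaluated on the thermal reference state $\tau_n$, and the lower bound via $D_{\max}\geq D$ together with the fact that the relative entropy of non-Gaussianity of a pure state equals the entropy of its Gaussian reference, $S(\tau_n)=\log[(n+1)^{n+1}/n^n]$. Your added details (the explicit optimization over $\bar n$, the Pythagorean-identity justification, and the support/finiteness checks) only make explicit what the paper delegates to its cited references.
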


\begin{proof}
   We use the reference Gaussian state for a Fock state $\ket{n}\!\bra{n}$ (i.e. the Gaussian state with the same first and second moments \cite{extremality,gaussianHS,gaussianRE}), which is a thermal state $\tau_n$ with average photon number $n$, written as 
    \begin{equation}
   \tau_n = \frac{1}{n+1}\sum_{m=0}^{\infty}\left(\frac{n}{n+1}\right)^m\ket{m}\!\bra{m}.
   \end{equation}
   Therefore for an upper bound we use (\ref{upper}) to get 
    \begin{equation}
    {\mathcal{R_G}}(\ket{n}\!\bra{n})\leq\bra{n}\tau_n^{-1}\ket{n} -1 = \frac{(n+1)^{n+1}}{n^n}-1.
    \end{equation}

    For a lower bound, we use the fact that the relative entropy of non-Gaussianity of a state $\rho$ \cite{gaussianRE} is minimized by the Gaussian reference state $\tau_\rho$  \cite{GREmimimized}, and its value when $\rho$ is pure is given simply by the von Neumann entropy of the reference state. We hence have 
    \begin{equation}
    \begin{aligned}
    \inf_{\sigma \in \mathcal{G}}D(\ket{n}\!\bra{n}\|\ \sigma) 
    &= -\tr(\tau_n \log \tau_n)\\ 
    &= \log \left(\frac{(n+1)^{n+1}}{n^n}\right) \\
    &\leq \log({\mathcal{R_G}}(\ket{n}\!\bra{n})+1).
    \end{aligned}
    \end{equation}
    The upper and lower bounds coincide.
\end{proof}

\begin{figure}[t]
\includegraphics[width=0.95\columnwidth]{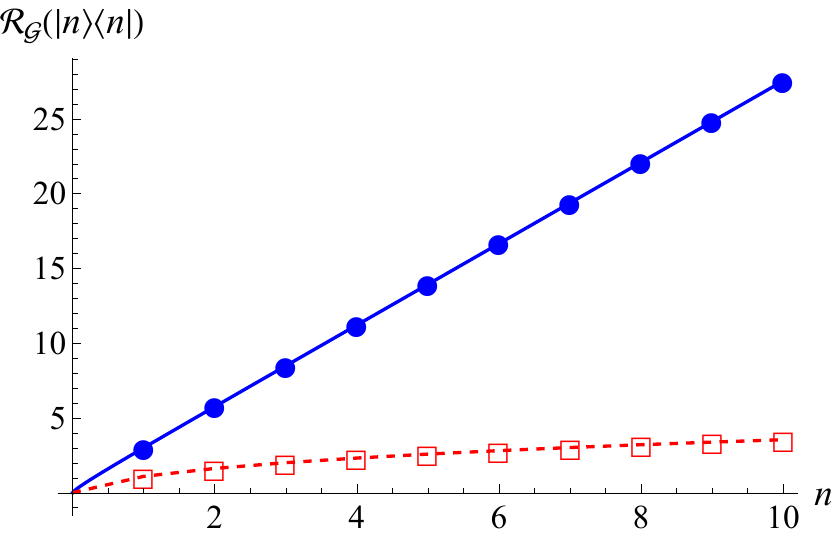}
\caption{\label{fig:rngfocks} Comparison of non-Gaussianity measures for Fock states $\ket{n}\!\bra{n}$ as a function of $n$. The solid blue line with circle markers denotes the generalized robustness of non-Gaussianity $\mathcal{R_G}(\ket{n}\!\bra{n})$ computed analytically in Eq.~(\ref{fockn}) with respect to the free set of Gaussian states ${\mathcal{G}}$. The dashed red line with square markers denotes the robustness of `genuine' non-Gaussianity $\mathcal{R}_{\overline{\mathcal{G}}}(\ket{n}\!\bra{n})$ obtained numerically in  Ref.~\cite{longinfiniteconvex} with respect to the convex hull of Gaussian states $\overline{\mathcal{G}}$.}
\end{figure}

We plot our exact expression (\ref{fockn}) for the robustness of non-Gaussianity of Fock states $\mathcal{R_G}(\ket{n}\!\bra{n})$ as a function of $n$ in Fig.~\ref{fig:rngfocks}. This is compared with the numerical expression for the robustness of `genuine' non-Gaussianity $\mathcal{R}_{\overline{\mathcal{G}}}(\ket{n}\!\bra{n})$ studied in \cite{infiniteconvex,longinfiniteconvex,note1inf} by considering the convex hull of Gaussian states $\overline{\mathcal{G}}$ as a free set. We notice that the robustness of non-Gaussianity evaluated in this paper grows linearly with $n$, $\mathcal{R_G}(\ket{n}\!\bra{n}) \approx {\rm e}\!\ n$, showing that the advantage of using Fock states in channel discrimination tasks over any Gaussian state is directly proportional to the photon number $n$. In contrast, the robustness evaluated with respect to the convex hull of Gaussian states \cite{infiniteconvex,longinfiniteconvex} grows only sublinearly, $\mathcal{R}_{\overline{\mathcal{G}}}(\ket{n}\!\bra{n}) \approx O(\sqrt{n})$.

We can also consider the multimode case where each mode $i$ has a Fock state with photon number $n_i$, with the overall state given by $\ket{\psi} = \ket{n_1} \otimes ... \otimes \ket{n_m}$. In this case, the Gaussian reference state is a tensor product of thermal states with mean photon number $n_i$ in each mode, and, using the same method as in Proposition~\ref{propofocker}, we get
\begin{equation}
    {\mathcal{R_G}}(\ket{\psi}\!\bra{\psi}) = \prod_{i=1}^m\left(\frac{(n_i+1)^{(n_i+1)}}{n_i^{n_i}}\right)-1.
\end{equation}

\subsection{Mixtures of Gaussian states}
We now consider a mixture of coherent states
\begin{equation}\label{rhomix}
\rho_{q,d} := \frac{1+q}{2} \ket{\alpha_d}\!\bra{\alpha_d} +  \frac{1-q}{2} \ket{\alpha_{-d}}\!\bra{\alpha_{-d}},
\end{equation}
with $q \in [-1,1]$, where
\begin{equation}
\ket{\alpha} = {\rm e}^{-\frac{|\alpha|^2}{2}}\sum_{n=0}^{\infty}\frac{\alpha^n}{\sqrt{n!}}\ket{n}
\end{equation}
denotes a single-mode coherent state and we take without loss of generality the coherent amplitude to be real, $\alpha_{\pm d} = \text{Re}[\alpha_{\pm d}]= \pm d/\sqrt 2$, with $d \in {\mathbb{R}}_{\geq 0}$; see Fig.~\ref{fig:wignermix}.

Note that $\rho_{q,d} \in \overline{\mathcal{G}}$, which means that the states $\rho_{q,d}$ have vanishing `genuine' non-Gaussianity, and also vanishing optical non-classicality, being mixtures of semiclassical states. Therefore, these states would be considered useless for typical resource theoretic applications constrained by convexity \cite{convexG1,convexG2,infiniteconvex,longinfiniteconvex}. Nevertheless, they are non-Gaussian, $\rho_{q,d} \not\in \mathcal{G}$, $\forall$  $d\neq 0,\ |q|\neq 1$, which means they can still be advantageous for channel discrimination following the results of this paper. Since such an advantage is quantifiable via the robustness of non-Gaussianity $\mathcal{R_G}(\rho_{q,d})$, we will provide useful bounds on the latter.

The states in (\ref{rhomix}) can equivalently be described by their Wigner distribution in phase space,
\begin{equation}\label{wignermix}
\mathcal{W}^\rho_{q,d}(x,y) = \frac{1+q}{2\pi} {\rm e}^{-(x-d)^2-y^2} + \frac{1-q}{2\pi}{\rm e}^{-(x+d)^2-y^2}.
\end{equation}
Alternatively, they can be expressed as qubit-like density operators with respect to the orthonormal basis $\{\ket{+},\ket{-}\}$, with $\ket{\pm} = \big(2(1\pm {\rm e}^{-d^2})\big)^{-1} \big(\ket{\alpha_d} \pm \ket{\alpha_{-d}}\big)$, as
\begin{equation}\label{qubitlike}
\rho_{q,d} = \frac12\left(
\begin{array}{cc}
 1+{\rm e}^{-d^2} & q \sqrt{1-{\rm e}^{-2d^2}}  \\
 q \sqrt{1-{\rm e}^{-2d^2}} q & 1-{\rm e}^{-d^2} \\
\end{array}
\right).
\end{equation}
The latter expression is useful to construct a witness operator such as the one based on Hilbert-Schmidt distance in Eq.~(\ref{lockness}), which detects non-Gaussianity in the states $\rho_{q,d}$ and enters in the explicit construction of a two-copy discrimination task for which these states outperform all Gaussian states according to Theorem~\ref{DiscriminationAdvantage}.

To provide a lower bound on the robustness, we can consider once again the relative entropy of non-Gaussianity \cite{gaussianRE,GREmimimized} and its hierarchical relation (\ref{lower}) with the max-relative entropy \cite{datta2009max},
    \begin{equation}
    \begin{aligned}
    \inf_{\sigma \in \mathcal{G}}D(\rho_{q,d}\|\sigma) 
    &= D(\rho_{q,d}\|\tau_{q,d}) \\ 
    &= S(\tau_{q,d}) - S(\rho_{q,d}) \\ 
    &\leq \log\big({\mathcal{R_G}}(\rho_{q,d})+1\big).
    \end{aligned}
    \end{equation}
Here the reference Gaussian state $\tau_{q,d}$ is a single-mode squeezed thermal state with displacement vector $\boldsymbol{\mu}_{q,d}$ and covariance matrix $\boldsymbol{V}_{q,d}$ given respectively by
\begin{equation}
\boldsymbol{\mu}_{q,d} =\left(
\begin{array}{c}
 d q \\
 0 \\
\end{array}
\right),\quad \boldsymbol{V}_{q,d}=\left(
\begin{array}{cc}
 2 d^2 \left(1-q^2\right)+1 & 0 \\
 0 & 1 \\
\end{array}
\right).
\end{equation}
The von Neumann entropy of $\tau_{q,d}$ can be evaluated via established methods \cite{holewer}, while the von Neumann entropy of $\rho_{q,d}$ can be computed by diagonalising Eq.~(\ref{qubitlike}). Explicitly,
\begin{equation}
\begin{aligned}
S(\tau_{q,d}) &= (1 + \nu_{q,d}) \log(1 + \nu_{q,d}) - \nu_{q,d} \log \nu_{q,d}\,, \\
S(\rho_{q,d}) &= - \lambda^{+}_{q,d} \log(\lambda^{+}_{q,d}) - \lambda^{-}_{q,d} \log(\lambda^{-}_{q,d})\,, 
\end{aligned}
\end{equation}
with
\begin{equation*}
\begin{aligned} 
\nu_{q,d} & = \frac{1}{2} \left(-1+\sqrt{1+2 d^2 (1- q^2)}\right), \\
\lambda^{\pm}_{q,d} & =\frac{1}{2} \left(1 \pm \sqrt{q^2+{\rm e}^{-2 d^2} \left(1-q^2\right)}\right).
\end{aligned}
\end{equation*}
We then find 
\begin{equation}\label{lowermixrelentless}
{\mathcal{R_G}}(\rho_{q,d})  \geq \exp\big[S(\tau_{q,d}) - S(\rho_{q,d})\big]-1\,.
\end{equation}
The above bound is faithful as it vanishes only for limiting values of the parameters, $|q|=1$ or $d=0$, in which cases the mixture $\rho_{q,d}$ trivially collapses into a single Gaussian state. 

A tighter bound on the robustness can be found by bounding the max-relative entropy directly in terms of the {\it measured} max-relative entropy evaluated on the classical probability distribution resulting from a POVM or projective measurement performed on the states \cite{bertasquallor}.
Consider a resource theory for which  $\lscr = \rob$ and let us denote by ${\cal M(H)}$ the set of POVMs on ${\cal H}$. We have then \cite{Mosonyi2015,Mosonyi2023,bertasquallor}
\begin{equation}
\begin{aligned}
1+ {\cal R_F}(\rho)&= \inf_{\sigma \in \ff} \sup_{k,\{M_k\}} \left\{\frac{\tr [\rho M_k]}{\tr [\sigma M_k]} : \{M_k\}_k \in {\cal M(H)}\right\} \\
&= \inf_{\sigma \in \ff} \sup_{k,\{\ket{k}\}} \left\{\frac{\bra{k}\rho\ket{k}}{\bra{k}\sigma\ket{k}} : \mbox{$\sum_k \ket{k}\!\bra{k}=\mathds{1}$}\right\} \\
&\geq  \sup_{k,\{\ket{k}\}} \inf_{\sigma \in \ff}\left\{\frac{\bra{k}\rho\ket{k}}{\bra{k}\sigma\ket{k}} : \mbox{$\sum_k \ket{k}\!\bra{k}=\mathds{1}$}\right\},
\end{aligned}
\end{equation}
where the sum $\sum_k$ can be replaced by an integral for a continuous basis $\{\ket{k}\}$.

\begin{figure}[t]
\includegraphics[width=0.95\columnwidth]{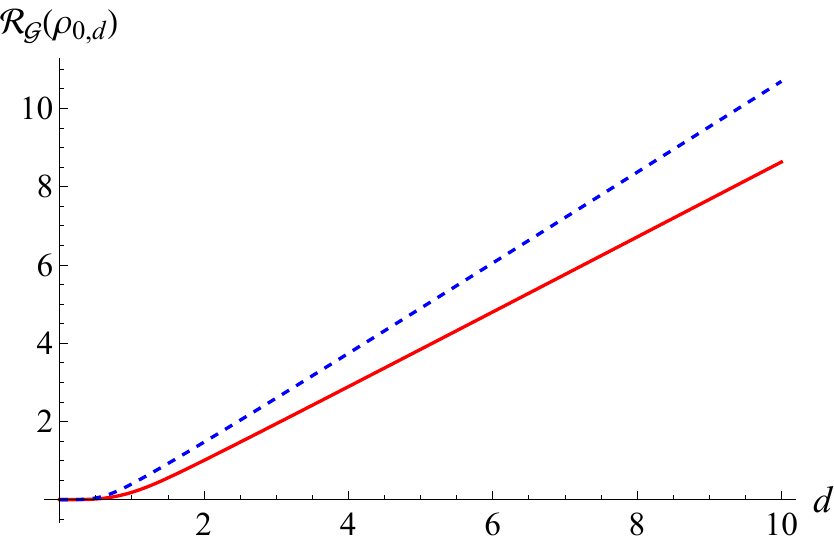}
\caption{\label{fig:rngbounds} Lower bounds on the generalized robustness of non-Gaussianity $\mathcal{R_G}$ of the states $\rho_{q,d}$ [Eq.~(\ref{rhomix})] with $q=0$, corresponding to balanced mixtures of two Gaussian coherent states with coherent amplitude $\pm d/\sqrt{2}$. The solid red curve corresponds to the bound (\ref{lowermixrelentless}) obtained from the relative entropy of non-Gaussianity. The dashed blue curve corresponds to the tighter bound (\ref{lowerhomo}) obtained from the measured max-relative entropy via an optimized quadrature measurement.}
\end{figure}

In our example, we can consider a homodyne measurement in the position basis $\{\ket{x}\}$, leading to a probability distribution given by the marginal $\bra{x} \rho \ket{x} = \int_{-\infty}^\infty \mathcal{W}^\rho(x,y) dy$ of the Wigner distribution $\mathcal{W}^\rho(x,y)$ for a single-mode state $\rho$.
We can thus write
\begin{equation}\label{wigratio}
    \begin{aligned}
       1+ {\cal R_G}(\rho_{q,d}) \geq \sup_x \inf_{\boldsymbol{\mu},\boldsymbol{V}}\frac{\int_{-\infty}^\infty \mathcal{W}^\rho_{q,d}(x,y) dy}{\int_{-\infty}^\infty \mathcal{W}^\sigma_{\boldsymbol{\mu},\boldsymbol{V}}(x,y) dy},
    \end{aligned}
\end{equation}
where $\mathcal{W}^\rho_{q,d}$ is given in (\ref{wignermix})
and $\mathcal{W}^\sigma_{\boldsymbol{\mu},\boldsymbol{V}}$ is the Wigner function (\ref{gaussianwigner}) of a generic single-mode Gaussian state $\sigma \in \mathcal{G}$ with displacement vector $\boldsymbol{\mu}$ and covariance matrix $\boldsymbol{V}$. 

From now on, we specialize to the case of a balanced cat-like mixture, $q=0$. Given the symmetry of the state $\rho_{0,d}$, we can restrict the optimization over the position eigenstates $\{\ket{x}\}$ to the positive semiaxis with no loss of generality, and we find that the optimization over the Gaussian state $\sigma \in \mathcal{G}$ is solved by $\boldsymbol{\mu} = (0,0)^T$ and $\boldsymbol{V}=\mathrm{diag}(a,1)$ with $a\geq 1$. This gives
\begin{equation}\label{envelope}
       1+ {\cal R_G}(\rho_{0,d}) \geq \sup_{x \geq 0} \inf_{a \geq 1}\frac{{\rm e}^{-(d+x)^2} \left({\rm e}^{4 d x}+1\right)/\sqrt2}{{\rm e}^{-\frac{x^2}{a}}/\sqrt{a}}.
\end{equation}
Evaluating the right-hand side in (\ref{envelope}) amounts to finding the tightest Gaussian envelope to the even Gaussian-mixture distribution appearing in the numerator. The minimization over $a$ is solved analytically by $a = 2x^2$, leading to the lower bound
\begin{equation}\label{lowerhomo}
       {\cal R_G}(\rho_{0,d}) \geq \sup_{x \geq 0} \left[{\rm e}^{\frac{1}{2}-(d+x)^2} \left({\rm e}^{4 d x}+1\right)\frac{x}{\sqrt2}\right]-1.
\end{equation}
 The value of $x$ that solves the remaining maximization, say $x = x^{\text{opt}}_d$, can be evaluated numerically. An analytical approximation  yields 
 $2 ({x^{\text{opt}}_d})^2 \geq 2 d^2+1+\tanh^8\big(\!\sqrt{d}\,\big) \geq 1$ $\forall d\geq 0$, which confirms that the optimal Gaussian state  $\sigma$ featured in the bound (\ref{wigratio}) is physical. 
 
 A comparison between the two lower bounds (\ref{lowermixrelentless}) and (\ref{lowerhomo}) to $\mathcal{R_G}(\rho_{0,d})$ is presented in Fig.~\ref{fig:rngbounds}. We see that the robustness of non-Gaussianity of $\rho_{0,d}$ (and hence the channel discrimination advantage enabled by these states) grows at least linearly with $d$, namely ${\cal R_G}(\rho_{0,d}) \gtrsim \sqrt{{\rm e}/2}\ d - 1$ for $d \gg 0$, despite the fact that these states are simple convex mixtures of Gaussian coherent states, with both a positive Wigner function and a positive Glauber-Sudarshan $P$-representation.

\section{Conclusions}
In this paper, we showed that \emph{any} resource state in \emph{any} resource theory can provide an advantage in the task of channel discrimination, without restricting to convex sets of free states or to finite dimensions. 

We began by considering multi-copy witness operators and showing that, for an arbitrary resource state, a two-copy witness operator always exists. Based on the existence of such a witness, we constructed a discrimination task to show that every resource state is advantageous over all free states in some two-copy channel discrimination task. 

Furthermore, we extended the definition of lower semicontinuous generalized robustness to the case of non-convex sets of free states in infinite-dimensional theories, and proved that it remains a faithful monotone. We investigated conditions under which this definition, when used in the general case, coincides with the simpler definition of generalized robustness employed in finite dimensions. We provided two operational interpretations for the generalized robustness monotone: firstly, we showed that it can always be used to find an upper bound for the maximal advantage given by a resource state compared to all free states in some multi-copy channel discrimination task. Secondly, we showed that, in many resource theories of interest, it exactly quantifies the worst case advantage given by a resource state compared to each set in a decomposition of the free states into convex sets. Our methods also allowed us to provide experimentally observable lower bounds for the robustness, based on the measurement of suitable multi-copy witness operators. 

Finally, we specialized to the resource theory of non-Gaussianity. Here, we calculated the generalized robustness of non-Gaussianity for Fock states with arbitrary photon number, and compared it to the robustness with respect to the convex hull of Gaussian states. We also investigated the generalized robustness of a mixture of two coherent states and provided lower bounds, thus demonstrating that even simple classical-like mixtures of Gaussian states can provide an advantage over all Gaussian states in the context of channel discrimination.

This work overcomes limitations of previous studies \cite{finiteconvexrobustness,infiniteconvex,longinfiniteconvex,finitenonconvex,longfinitenonconvex} and provides a universal operational scenario to characterize the usefulness of quantum resources. Our work demonstrates in particular that any instance of non-Gaussianity can be advantageous for specific discrimination tasks in continuous variable quantum technologies. A more systematic analysis of the power of non-Gaussian states and operations to provide enhancements in quantum sensing and metrology applications, inspired by the methods presented in this work, deserves further consideration.

\begin{acknowledgments}
We acknowledge fruitful discussions with M.~Berta, K.~Kuroiwa, M.~Mosonyi, M.~Piani, B.~Regula, S.~Sibilia, R.~Takagi, and especially L.~Lami. 
This work was supported by the Engineering and Physical Sciences Research Council (EPSRC) [Grants No.~EP/W524402/1, EP/X010929/1, and EP/T022140/1].
The authors confirm that no new data were created during this study.
\end{acknowledgments}

\bibliographystyle{apsrevfixedwithtitles}
\bibliography{bib} 
\end{document}